\tikzstyle{every picture}=[baseline=-0.25em]
\tikzstyle{none}=[inner sep=0mm]
\tikzstyle{black dot}=[inner sep=1pt,minimum width=0pt,minimum height=0pt,fill=black,draw=black,shape=circle, font=\scriptsize]
\tikzstyle{dot}=[black dot]
\tikzstyle{white dot}=[dot,fill=white]
\tikzstyle{box}=[rectangle,fill=white,draw=black, font=\scriptsize, inner sep=2pt]
\tikzstyle{box-no-outline}=[rectangle, draw=white, fill=white, inner sep=2pt]
\tikzstyle{highlight}=[inner sep=0.8mm,minimum width=0pt,minimum height=0pt,draw=purple,shape=circle]
\tikzset{tickedge/.append style={
			decoration={markings, mark=at position 0.5 with {
					\draw[-] ++ (-2pt,-2pt) -- (2pt,2pt);}
			},postaction={decorate}}
}
\tikzstyle{every loop}=[]
\Crefname{algocf}{Algorithm}{Algorithms}
\title{Resource-Efficient Synthesis of Sparse Quantum States}
\author{Renaud Vilmart}{Université Paris-Saclay, Inria, CNRS, ENS Paris-Saclay, Laboratoire Méthodes Formelles, 91190, Gif-sur-Yvette, France.}{renaud.vilmart@inria.fr}{https://orcid.org/0000-0002-8828-4671}{}
\author{Sunheang Ty}{IRT SystemX, Gif-sur-Yvette, France.}{sunheang.ty@irt-systemx.fr}{https://orcid.org/0009-0008-4154-7017}{}
\author{Chetra Mang}{IRT SystemX, Gif-sur-Yvette, France.}{chetra.mang@irt-systemx.fr}{https://orcid.org/0000-0002-0125-2707}{}
\authorrunning{R.~Vilmart \& S.~Ty \& C.~Mang}
\keywords{Quantum Circuit Synthesis; Sparse State Preparation; Fault-Tolerant}
\begin{document}

\maketitle

\begin{abstract}
Preparing a quantum circuit that implements a given sparse state is an important building block that is necessary for many different quantum algorithms. In the context of fault-tolerant quantum computing, the so-called non-Clifford gates are much more expensive to perform than the Clifford ones. We hence provide an algorithm for synthesizing sparse quantum states with a special care for quantum resources. The circuit depth, ancilla count, and crucially non-Clifford count of the circuit produced by the algorithm are all linear in the sparsity when access to arbitrary-angled rotations is given. When compiled down to the standard Clifford+T gate set, several constructions can be given for increasingly better T-count and depth at the expense of a larger number of ancillae. The most optimised construction for T-count reaches $\mathcal O\left(\sqrt{s\log_2(1/\epsilon)}+\log_2(1/\epsilon)\right)$ T gates for error $\epsilon$, a result on par with an optimal construction for full state preparation by Gosset et al.~\cite{Gosset2025Preparation}.

The constructions are broken into two parts, one that synthesises a generalized W-state, well studied in the literature; and the second which is a classical reversible circuit implementing a permutation that maps the basis states of the W-state to those of the target sparse quantum state. We reduce this problem to the diagonalization of a binary matrix, using a specific set of elementary matrix operations corresponding to the classical reversible gates. We then solve this problem using a new version of Gauss-Jordan elimination, that minimizes the circuit complexities including circuit depth using parallel elimination steps. When the circuit is applied in one direction, we notice that all occurrences of (the expensive) Toffoli gates can all be replaced by adaptive Clifford circuits, leading to a better non-Clifford count.
\end{abstract}

\section{Introduction}
\label{sec:intro}

Quantum state preparation, i.e., the procedure to prepare an arbitrary quantum state, is an important algorithmic subroutine in quantum computing and serves as a key component to many quantum algorithms, including Hamiltonian simulations \cite{low2019hamiltonian,berry2015simulating,berry2015hamiltonian,berry2009black,low2017optimal}, quantum linear solvers \cite{harrow2009quantum,ambainis2010variable,childs2017quantum,gilyen2019quantum,an2022quantum,lin2020near,costa2022optimal,dalzell2024shortcut,low2024quantum}, and quantum machine learning \cite{wiebe2012quantum,brandao2017quantum,schuld2021machine,chakraborty2018power,biamonte2017quantum}. Consequently, it plays an essential role in the reasoning and analysis of these algorithms. In spite of this, the lower-bound on the circuit size, (i.e., the total number of quantum gate in a quantum circuit), for preparing an arbitrary $n$-qubit quantum state is $\mathbf{\Omega}(2^n)$ \cite{plesch2011quantum}, i.e., exponential in the number of qubits. In other word, despite its importance, the subroutine turns out to be very resource-inefficient, and likewise has a great impact on the performance of the quantum algorithms which use the subroutine. Several implementations \cite{Zhang2022Quantum,sun2023asymptotically,yuan2023optimal,gui2024spacetime} attest to this fact. For many applications, however, the quantum state needed to be prepared is sparse, i.e., its corresponding vector has only few non-zero entries. In such cases, the computational resources required to synthesize the quantum state are quite efficient as demonstrated in \cite{Zhang2022Quantum,Gleinig2021Efficient,Fomichev2024Initial,Li2025Nearly}. For an $n$-qubit and $s$-sparse quantum state, the quantum circuit for preparing such state has a circuit size as low as $\mathcal{O}(\frac{sn}{log(n)}+n)$ \cite{Li2025Nearly}. In fault-tolerant quantum computing, some gates (so-called Clifford gates) are easy to implement, while gates outside this set require considerably more resources. Non-Clifford count hence becomes an important metric to be minimised, and which has surprisingly seen little consideration in the task of sparse state synthesis.

In this article, we propose a novel methodology to construct a quantum circuit for preparing a sparse quantum state, which is resource-efficient across many metrics. More specifically, our quantum circuit has $\mathcal{O}(s\max\{s,n\})$ size, $\mathcal{O}(s+\max\{s,n\})$ depth, $\mathcal{O}(s)$ non-Clifford count, and $\max\{s,n\}-n$ ancillary qubits. A distinctive feature of our circuit is its depth and non-Clifford count; both of which are better than the state-of-the-art. The depth is (additively) linear in both sparsity $s$ and number of qubit $n$, while the non-Clifford count is linear in sparsity $s$, while being independent on the number of qubit. This is achieved at the cost of slightly higher circuit size and ancillary qubit. In many practical applications, it also often happens that $s\leq n$, in which case no ancilla is needed. The methodology is divided into two parts: (1) a quantum circuit for preparing a generalization of a special quantum state called W-state (which we can already find in \cite{Johri2021Nearest}), and (2) a classically reversible quantum circuit which transforms the terms of the W-state to those of the sparse quantum state. The former comprises solely $CX$ and $CR_Y$ gates, while the latter comprises $X$, $CX$ and $CCX$ gates. In both cases, we primarily attempt to reduce the number of non-Clifford gates, i.e., $CR_Y$ and $CCX$, while also optimizing gate parallelization, which leads to a linear depth. 

Other approaches for synthesizing sparse quantum states are considered in the literature. \Cref{tab:comparison} gives an overview of the different metrics associated with these approaches. Notice that we focus on exact synthesis, and leave out approaches that produce an approximation of the state such as \cite{Feniou2024Sparse}. Interestingly, to the best of our knowledge, non-Clifford count was only considered and optimized in \cite{Fomichev2024Initial,Gleinig2021Efficient}, where the obtained metric is $\mathcal O(s\log(s))$ assuming $s>n$. In both cases, the depth is $\mathcal O(sn)$. Two approaches \cite{Li2025Nearly,Zhang2022Quantum} manage to get logarithmic depth, but at the cost of a large number of ancillas, and a priori of non-Clifford gates.

\begin{table}[!ht]
	\setlength{\tabcolsep}{.25cm}
	\centering
	\begin{tabular}{@{}lllll@{}}
		\toprule
		\textbf{Reference} & 
		\textbf{Size} & 
		\textbf{Depth} & 
		\textbf{Non-Clifford} & 
		\textbf{Ancillas} 
		\\
		\midrule
		\textbf{Ours} & 
		$\mathcal O(s\max\{s,n\})$ & 
		$\mathcal O(s+\max\{s,n\})$ & 
		$\mathcal O(s)$ & 
		$\max\{s,n\}-n$ 
		\\
		Gleining et al.~\cite{Gleinig2021Efficient} & 
		$\mathcal O(s n)$ & 
		$\mathcal O(s n)$ & 
		$\mathcal O(s\log(s)+n)$ & 
		$0$ 
		\rule{0pt}{.6cm}
		\\
		Fomichev et al.~\cite{Fomichev2024Initial} & 
		$\mathcal O(sn)^*$ & 
		$\mathcal O(sn)^*$ & 
		$\mathcal O(s\log(s))$ & 
		$\mathcal O(\log(s))$ 
		\rule{0pt}{.6cm}
		\\
		Li et al.~\cite{Li2025Nearly} & 
		$\mathcal O\left(\frac{s n}{\log(n)}+n\right)$ & 
		$\mathcal O\left(\frac{s n}{\log(n)}+n\right)$ & 
		$\mathcal O\left(\frac{s n}{\log(n)}+n\right)^*$ & 
		$0$
		\rule{0pt}{.6cm}
		\\
		Li et al.~\cite{Li2025Nearly} & 
		$\mathcal O\left(\frac{s n}{\log(s)}\right)$ & 
		$\mathcal O\left(\log(s n)\right)$ & 
		$\mathcal O\left(\frac{s n}{\log(s)}\right)^*$ & 
		$\mathcal O\left(\frac{s n}{\log(s)}\right)$
		\rule{0pt}{.6cm}
		\\
		Zhang et al.~\cite{Zhang2022Quantum} & 
		--- & 
		$\mathcal O\left(\log(s n)\right)$ & 
		--- & 
		$\mathcal O\left(s\log(s)n\right)$
		\rule{0pt}{.6cm}
		\\
		\cite{Malvetti2021Quantum,Ramacciotti2024Simple,Tubman2018Postponing,deVeras2022Double} & 
		$\mathcal O(s n)$ & 
		$\mathcal O(s n)$ & 
		--- & 
		$\mathcal O(1)$
		\rule{0pt}{.6cm}
		\\
		\bottomrule
	\end{tabular}
	\medskip
	\caption{Metrics of different methods for exact sparse state preparation.\\
		--- indicates that the metric was not considered in the referenced paper.\\
		* indicates that the metric was not considered but can be inferred from analysis.}
	\label{tab:comparison}
\end{table}

Many fault-tolerant schemes require a specific gate set, the Clifford+T gate set, where Clifford gates are complemented by a single additional gate, called T gate. Since all states and unitaries can be approximated to arbitrary precision within this gate set, many synthesis results compile down to Clifford+T, and several results about bounds on the number of necessary T gates are known. Recently, a construction that is optimal for arbitrary state preparation has been shown \cite{Gosset2025Preparation}, and yields $\mathcal O\left(\sqrt{2^n\log_2(1/\epsilon)}+\log_2(1/\epsilon)\right)$ T gates to implement $n$-qubit states up to error $\epsilon$. It is possible to use this result for the compilation of sparse states using our construction, yielding, for varying number of ancillae, circuits with metrics compiled in \Cref{tab:resource-ClifT}.
\begin{table}[!ht]
	\setlength{\tabcolsep}{.3cm}
	\centering
	\begin{tabular}{@{}lll@{}}
		\toprule
		\textbf{T-count} & \textbf{Depth} & \textbf{Ancillae} \\
		\midrule
		$\mathcal{O}(s\log_2(s/\epsilon))$ & $\mathcal{O}(s+n+\log_2(s)\log_2(1/\epsilon))$ & $\max(0,s-n)$ \\
		$\mathcal O\left(s+\log_2(1/\epsilon)\right)$ & $\mathcal O\left(s\log_2(1/\epsilon)+n\right)$ & $\mathcal O\left( \sqrt{s\log_2(1/\epsilon)}+\log_2(1/\epsilon) \right)$ \\
		$\mathcal O\left(\sqrt{s\log_2(1/\epsilon)}+\log_2(1/\epsilon)\right)$ & $\mathcal O\left(s\log_2(1/\epsilon)+n\right)$ & $\mathcal O\left(s^2+\log_2(1/\epsilon)\right)$\\
		\bottomrule
	\end{tabular}
	\medskip
	\caption{Resource analysis for the proposed sparse state preparation methods in fault-tolerant settings.}
	\label{tab:resource-ClifT}
\end{table}

Finally, we identify of family of states which can be implemented exactly using $\mathcal O(\sqrt s)$ T-gates, with a success probability $>1/2$: so-called ``T-uniform'' sparse states, i.e.~sparse states whose non-zero entries are in $\{e^{i\frac{k\pi}4}\mid 0\leq k<8\}$ up to renormalisation.

The article is organized as follows: \Cref{sec:prelim} provides the necessary background in quantum computing to understand the rest of the paper, \Cref{sec:permut} shows the construction of a permutation circuit that is used in several occurrences, and is optimised in depth and T-count. \Cref{sec:w-state} then shows how to efficiently perform the first part of our construction, both in the ideal setting and in the fault-tolerant setting. Finally, \Cref{sec:sparse-state} combines the results from the previous sections to provide sparse state synthesis algorithms that reach the aforementioned metrics.

\section{Preliminaries}
\label{sec:prelim}

Let us start by introducing the core concepts of quantum computing \cite{nielsen_chuang_2010} that will be needed in this paper.

In the following, we will use the Dirac notation and the matrix representation of the finite-dimensional linear maps interchangeably. $\ket0$ and $\ket1$ denote the two elements of the canonical orthonormal basis of $\mathbb C^2$: $\ket0:=\begin{pmatrix}1&0\end{pmatrix}^\dagger$ and $\ket1:=\begin{pmatrix}0&1\end{pmatrix}^\dagger$, where $\dag$ is the \emph{dagger}, i.e.~the conjugate transpose, or adjoint. A bit string of size $n$ inside the \emph{ket} notation $\ket{...}$ is a shorthand for the Kronecker product of the corresponding kets (e.g.~$\ket{011}:=\ket0\otimes\ket1\otimes\ket1$). Such a state is called an $n$-qubit classical state; and the $2^n$ different $n$-qubit classical states form the canonical orthonormal basis of $\mathbb C^{2^n}$. Any $n$-qubit quantum state $\ket\psi$ is a normalized linear combination of the $n$-qubit classical states. While an unnormalized element of $\mathbb C^{2^n}$ does not constitute a quantum state, they are mathematically relevant and will be used in the following. The \emph{bra} notation $\bra{...}$ represents the dagger of the corresponding ket: $\bra\psi := \ket\psi^\dag$. In the following, we shall use $\ket{e_i}$ to denote the classical state $\ket{0...010...0}$ where the $1$ appears at index $i$. The number of qubits of the state in this notation is ambiguous, but it will usually be clear from the context.

A valid (measurement-free) quantum operation $f$ from $n$ qubits to $m$ qubits has to send any $n$-qubit quantum state to an $m$ qubit quantum state, it is hence an isometry: $f^\dag\circ f=I$, where $I$ is the identity. When $n=m$, this evolution is even a unitary: $f^\dag\circ f=f\circ f^\dag=I$. Notice that any linear map $f:\mathbb C^{2^n}\to\mathbb C^{2^m}$ can be written as either a matrix or a linear combination of compositions of kets and bras. For instance, $I = \ketbra0+\ketbra1$ in $\mathbb C^2$. 

Quantum circuits are visual representations of quantum operations of quantum states in the algorithms. The basic operations, called quantum gates, can be interpreted as liner maps, and composed to form larger circuits which are again interpreted as linear maps. Circuits can be composed either sequentially (by plugging the outputs of the first circuit to the inputs of the second circuit), which interprets as the usual composition of linear maps; or in parallel (by stacking the first on top of the second), which interprets as the Kronecker product of linear maps. 

A universal gate set is a set of quantum gates that can be composed to generate any linear isometry $f:\mathbb C^{2^n}\to\mathbb C^{2^m}$. One such example is called the Clifford+Phase gate set $\{H,CX,P(\alpha),\ket{0}\}$ for all $\alpha \in \mathbb{R}$, whose matrices and visual representations are:
\begin{gather*}
	\begin{array}{l}
		H 
		:= 
		\frac1{\sqrt2}
		\begin{pmatrix}
			1 & 1 \\ 1 & -1
		\end{pmatrix} 
		:= 
		
\begin{tikzpicture}
	\begin{pgfonlayer}{nodelayer}
		\node [style=none] (0) at (-0.5, 0) {};
		\node [style=box] (1) at (0, 0) {$H$};
		\node [style=none] (2) at (0.5, 0) {};
	\end{pgfonlayer}
	\begin{pgfonlayer}{edgelayer}
		\draw (2.center) to (0.center);
	\end{pgfonlayer}
\end{tikzpicture}

		\\[0.7em]
		P(\alpha) 
		:= 
		\begin{pmatrix}
			1&0\\0&e^{i\alpha}
		\end{pmatrix} 
		:= 
		
\begin{tikzpicture}
	\begin{pgfonlayer}{nodelayer}
		\node [style=none] (0) at (-0.75, 0) {};
		\node [style=box] (1) at (0, 0) {$P(\alpha)$};
		\node [style=none] (2) at (0.75, 0) {};
	\end{pgfonlayer}
	\begin{pgfonlayer}{edgelayer}
		\draw (2.center) to (0.center);
	\end{pgfonlayer}
\end{tikzpicture}

	\end{array}
	\qquad\qquad
	CX 
	:= 
	\begin{pmatrix}
		1&0&0&0\\
		0&1&0&0\\
		0&0&0&1\\
		0&0&1&0
	\end{pmatrix} 
	:= 
	
\begin{tikzpicture}
	\begin{pgfonlayer}{nodelayer}
		\node [style=none] (0) at (-0.5, -0.25) {};
		\node [style=none] (1) at (-0.5, 0.25) {};
		\node [style=none] (2) at (0.5, 0.25) {};
		\node [style=none] (3) at (0, -0.25) {$\oplus$};
		\node [style=none] (4) at (0.5, -0.25) {};
		\node [style=dot] (5) at (0, 0.25) {};
	\end{pgfonlayer}
	\begin{pgfonlayer}{edgelayer}
		\draw (1.center) to (2.center);
		\draw (4.center) to (0.center);
		\draw (3.center) to (5);
	\end{pgfonlayer}
\end{tikzpicture}

\end{gather*}
Despite providing universality, $P(\alpha)$ is an infinite and uncountable family of gates, which can pose problems for error correction. Quantum computers are intrinsically noisy, and hence will require an error correction scheme in order to perform most tasks devised for them. 

However, we can transform a circuit in the universal gate set to the Clifford+T gate set $\{H,CX,T,\ket{0}\}$, where $T := P(\pi/4)$. This Clifford+T gate set is said to be approximately universal, i.e., its circuits can approximate any quantum evolution using additional polynomial computational resources \cite{Kitaev1997Quantum,Selinger2015Efficient,Ross2016Optimal}. Another interesting gate set is the Clifford gate set $\{H,CX,S,\ket{0}\}$, where $S := P(\pi/2) = T^2$. This gate set is not universal or approximately universal; however, it can be implemented more efficiently than the Clifford+T set in many error correction schemes. The objective is therefore to first minimize the number of phase gates $P(\alpha)$, and then the number of $T$ gates; both of which are called non-Clifford gates.

It is convenient to define the following (Clifford) gates as well:
\[Z:=P(\pi)=S^2\qquad X:= HZH\qquad 
CZ:=\begin{pmatrix}
	1&0&0&0\\
	0&1&0&0\\
	0&0&1&0\\
	0&0&0&-1
	\end{pmatrix}:=
\begin{tikzpicture}
	\begin{pgfonlayer}{nodelayer}
		\node [style=none] (0) at (0, -0.25) {};
		\node [style=none] (1) at (0, 0.25) {};
		\node [style=none] (2) at (1.75, 0.25) {};
		\node [style=none] (3) at (0.875, -0.25) {$\oplus$};
		\node [style=none] (4) at (1.75, -0.25) {};
		\node [style=dot] (5) at (0.875, 0.25) {};
		\node [style=none] (6) at (-0.5, 0) {$=$};
		\node [style=none] (7) at (-1.75, -0.25) {};
		\node [style=none] (8) at (-1.75, 0.25) {};
		\node [style=none] (9) at (-1, 0.25) {};
		\node [style=dot] (10) at (-1.375, -0.25) {};
		\node [style=none] (11) at (-1, -0.25) {};
		\node [style=dot] (12) at (-1.375, 0.25) {};
		\node [style=box] (13) at (0.375, -0.25) {$H$};
		\node [style=box] (14) at (1.375, -0.25) {$H$};
	\end{pgfonlayer}
	\begin{pgfonlayer}{edgelayer}
		\draw (1.center) to (2.center);
		\draw (4.center) to (0.center);
		\draw (3.center) to (5);
		\draw (8.center) to (9.center);
		\draw (11.center) to (7.center);
		\draw (10) to (12);
	\end{pgfonlayer}
\end{tikzpicture}
\]

In this article, we will encounter three types of non-Clifford gates: $CR_Y(\theta)$, $CH$, and $CCX$ gates. The $CR_Y(\theta)$ gate is a two-qubit gates which can be composed using the Clifford+Phase gate set. A special case (up to a permutation of columns or rows) is $CH$ which can be decomposed using the Clifford+T gate set. Their matrix representations and circuit decomposition are given as follows:
\begin{gather*}
	CR_Y(\theta)
	:=
	\begin{pmatrix}
		I & 0 \\
		0 & R_Y(\theta)
	\end{pmatrix}
	\quad
	\text{where}
	\quad
	R_Y(\theta)
	:=
	\begin{pmatrix}
		\cos(\frac\theta2) & -\sin(\frac\theta2)\\[0.5em]
		\sin(\frac\theta2) & \cos(\frac\theta2)
	\end{pmatrix}
	\\[.5cm]
	
\begin{tikzpicture}
	\begin{pgfonlayer}{nodelayer}
		\node [style=none] (0) at (-4, -0.25) {};
		\node [style=none] (1) at (-4, 0.25) {};
		\node [style=none] (2) at (-2.5, 0.25) {};
		\node [style=box] (3) at (-3.25, -0.25) {$R_Y(\theta)$};
		\node [style=none] (5) at (-2.5, -0.25) {};
		\node [style=dot] (6) at (-3.25, 0.25) {};
		\node [style=none] (7) at (-2, 0) {$=$};
		\node [style=none] (8) at (-1.5, -0.25) {};
		\node [style=none] (9) at (-1.5, 0.25) {};
		\node [style=none] (10) at (7, 0.25) {};
		\node [style=none] (11) at (3, -0.25) {$\oplus$};
		\node [style=none] (12) at (7, -0.25) {};
		\node [style=dot] (13) at (3, 0.25) {};
		\node [style=box] (14) at (-0.5, -0.25) {$S$};
		\node [style=box] (15) at (0, -0.25) {$H$};
		\node [style=box] (16) at (1, -0.25) {$P(-\theta/2)$};
		\node [style=box] (17) at (2, -0.25) {$H$};
		\node [style=box] (18) at (6, -0.25) {$H$};
		\node [style=box] (19) at (5, -0.25) {$P(+\theta/2)$};
		\node [style=box] (20) at (4, -0.25) {$H$};
		\node [style=box] (21) at (3.5, -0.25) {$S$};
		\node [style=none] (22) at (-1, -0.25) {$\oplus$};
		\node [style=dot] (23) at (-1, 0.25) {};
		\node [style=box] (24) at (2.5, -0.25) {$S^\dag$};
		\node [style=box] (25) at (6.5, -0.25) {$S^\dag$};
	\end{pgfonlayer}
	\begin{pgfonlayer}{edgelayer}
		\draw (1.center) to (2.center);
		\draw (5.center) to (0.center);
		\draw (3) to (6);
		\draw (9.center) to (10.center);
		\draw (12.center) to (8.center);
		\draw (11.center) to (13);
		\draw (22.center) to (23);
	\end{pgfonlayer}
\end{tikzpicture}

	\\[.2cm]
	
\begin{tikzpicture}
	\begin{pgfonlayer}{nodelayer}
		\node [style=none] (0) at (-2.5, -0.25) {};
		\node [style=none] (1) at (-2.5, 0.25) {};
		\node [style=none] (2) at (-1.5, 0.25) {};
		\node [style=box] (3) at (-2, -0.25) {$H$};
		\node [style=none] (5) at (-1.5, -0.25) {};
		\node [style=dot] (6) at (-2, 0.25) {};
		\node [style=none] (7) at (-1, 0) {$=$};
		\node [style=none] (8) at (-0.5, -0.25) {};
		\node [style=none] (9) at (-0.5, 0.25) {};
		\node [style=none] (10) at (4.5, 0.25) {};
		\node [style=none] (11) at (2, -0.25) {$\oplus$};
		\node [style=none] (12) at (4.5, -0.25) {};
		\node [style=dot] (13) at (2, 0.25) {};
		\node [style=box] (14) at (0, -0.25) {$H$};
		\node [style=box] (15) at (0.5, -0.25) {$S$};
		\node [style=box] (16) at (1, -0.25) {$H$};
		\node [style=box] (17) at (1.5, -0.25) {$T$};
		\node [style=box] (18) at (4, -0.25) {$H$};
		\node [style=box] (19) at (3.5, -0.25) {$S^\dag$};
		\node [style=box] (20) at (3, -0.25) {$H$};
		\node [style=box] (21) at (2.5, -0.25) {$T^\dag$};
	\end{pgfonlayer}
	\begin{pgfonlayer}{edgelayer}
		\draw (1.center) to (2.center);
		\draw (5.center) to (0.center);
		\draw (3) to (6);
		\draw (9.center) to (10.center);
		\draw (12.center) to (8.center);
		\draw (11.center) to (13);
	\end{pgfonlayer}
\end{tikzpicture}

\end{gather*}
The $CCX$ gate is a three-qubit gates decomposable in the Clifford+T gate set as follows:
\begin{gather*}
	
\begin{tikzpicture}
	\begin{pgfonlayer}{nodelayer}
		\node [style=none] (0) at (-1, -0.5) {};
		\node [style=none] (1) at (-1, 0.5) {};
		\node [style=none] (2) at (-0.25, 0.5) {};
		\node [style=none] (3) at (-0.625, -0.5) {$\oplus$};
		\node [style=none] (4) at (-0.25, -0.5) {};
		\node [style=dot] (5) at (-0.625, 0.5) {};
		\node [style=none] (6) at (-1, 0) {};
		\node [style=none] (7) at (-0.25, 0) {};
		\node [style=dot] (8) at (-0.625, 0) {};
		\node [style=none] (9) at (0.25, 0) {$=$};
		\node [style=none] (10) at (0.75, -0.5) {};
		\node [style=none] (11) at (0.75, 0.5) {};
		\node [style=none] (12) at (7, 0.5) {};
		\node [style=none] (13) at (1.75, -0.5) {$\oplus$};
		\node [style=none] (14) at (7, -0.5) {};
		\node [style=dot] (15) at (1.75, 0) {};
		\node [style=none] (16) at (0.75, 0) {};
		\node [style=none] (17) at (7, 0) {};
		\node [style=box] (19) at (1.25, -0.5) {$H$};
		\node [style=none] (20) at (2.75, -0.5) {$\oplus$};
		\node [style=dot] (21) at (2.75, 0.5) {};
		\node [style=box] (22) at (2.25, -0.5) {$T^\dag$};
		\node [style=box] (23) at (3.25, -0.5) {$T$};
		\node [style=none] (24) at (3.75, -0.5) {$\oplus$};
		\node [style=dot] (25) at (3.75, 0) {};
		\node [style=none] (26) at (4.75, -0.5) {$\oplus$};
		\node [style=dot] (27) at (4.75, 0.5) {};
		\node [style=box] (28) at (4.25, -0.5) {$T^\dag$};
		\node [style=box] (29) at (5.25, -0.5) {$T$};
		\node [style=box] (30) at (5.75, -0.5) {$H$};
		\node [style=box] (31) at (5.25, 0) {$T$};
		\node [style=none] (32) at (5.75, 0) {$\oplus$};
		\node [style=dot] (33) at (5.75, 0.5) {};
		\node [style=box] (34) at (6.25, 0) {$T^\dag$};
		\node [style=box] (35) at (6.25, 0.5) {$T$};
		\node [style=none] (36) at (6.75, 0) {$\oplus$};
		\node [style=dot] (37) at (6.75, 0.5) {};
	\end{pgfonlayer}
	\begin{pgfonlayer}{edgelayer}
		\draw (1.center) to (2.center);
		\draw (4.center) to (0.center);
		\draw (3.center) to (5);
		\draw (6.center) to (7.center);
		\draw (11.center) to (12.center);
		\draw (14.center) to (10.center);
		\draw (13.center) to (15);
		\draw (16.center) to (17.center);
		\draw (20.center) to (21);
		\draw (24.center) to (25);
		\draw (26.center) to (27);
		\draw (32.center) to (33);
		\draw (36.center) to (37);
	\end{pgfonlayer}
\end{tikzpicture}

\end{gather*}
When the target state is initially in $\ket{0}$, or if we allow an ancilla, the number of T-gates can be reduced to $4$ \cite{Jones2013LowOverhead}. A more important special case where the $CCX$ gate can be simplified, is when the last qubit is guaranteed to be set to $\ket0$ after its application. In that case, the gate can be implemented in the Clifford fragment, using a measurement and a classically controlled $CZ$ gate \cite{Jones2013LowOverhead,Gidney2018halving,Beverland2020Lower}:
\[
\begin{tikzpicture}
	\begin{pgfonlayer}{nodelayer}
		\node [style=none] (0) at (-0.5, -0.5) {};
		\node [style=none] (1) at (-0.5, 0.5) {};
		\node [style=none] (2) at (1, 0.5) {};
		\node [style=none] (3) at (0, -0.5) {$\oplus$};
		\node [style=none] (4) at (0.5, -0.5) {};
		\node [style=dot] (5) at (0, 0.5) {};
		\node [style=none] (6) at (-0.5, 0) {};
		\node [style=none] (7) at (1, 0) {};
		\node [style=dot] (8) at (0, 0) {};
		\node [style=none] (9) at (0.75, -0.5) {$\ket0$};
	\end{pgfonlayer}
	\begin{pgfonlayer}{edgelayer}
		\draw (1.center) to (2.center);
		\draw (4.center) to (0.center);
		\draw (3.center) to (5);
		\draw (6.center) to (7.center);
	\end{pgfonlayer}
\end{tikzpicture}
~~=~~
\begin{tikzpicture}
	\begin{pgfonlayer}{nodelayer}
		\node [style=none] (0) at (-2, -0.5) {};
		\node [style=none] (1) at (-2, 0.5) {};
		\node [style=none] (2) at (-1, 0.5) {};
		\node [style=none] (4) at (-1.5, -0.5) {};
		\node [style=dot] (5) at (-1.5, 0.5) {};
		\node [style=none] (6) at (-2, 0) {};
		\node [style=none] (7) at (-1, 0) {};
		\node [style=dot] (8) at (-1.5, 0) {};
		\node [style=none] (9) at (-0.25, 0) {$=$};
		\node [style=none] (10) at (0.5, -0.5) {};
		\node [style=none] (11) at (0.5, 0.5) {};
		\node [style=none] (12) at (2.75, 0.5) {};
		\node [style=dot] (14) at (2.25, 0.5) {};
		\node [style=none] (15) at (0.5, 0) {};
		\node [style=none] (16) at (2.75, 0) {};
		\node [style=dot] (17) at (2.25, 0) {};
		\node [style=none] (18) at (1.325, -0.5) {};
		\node [style=none] (19) at (1.325, -0.325) {};
		\node [style=none] (20) at (1.325, -0.675) {};
		\node [style=none] (21) at (1.925, -0.325) {};
		\node [style=none] (22) at (1.925, -0.675) {};
		\node [style=none] (23) at (1.4, -0.6) {};
		\node [style=none] (24) at (1.85, -0.6) {};
		\node [style=none] (25) at (1.625, -0.6) {};
		\node [style=none] (26) at (1.75, -0.4) {};
		\node [style=none] (27) at (1.925, -0.475) {};
		\node [style=none] (28) at (2.225, -0.475) {};
		\node [style=none] (29) at (1.925, -0.525) {};
		\node [style=none] (30) at (2.275, -0.525) {};
		\node [style=none] (32) at (2.225, 0.05) {};
		\node [style=none] (33) at (2.275, -0.025) {};
		\node [style=box] (34) at (0.875, -0.5) {$H$};
	\end{pgfonlayer}
	\begin{pgfonlayer}{edgelayer}
		\draw (1.center) to (2.center);
		\draw (4.center) to (0.center);
		\draw (6.center) to (7.center);
		\draw (4.center) to (5);
		\draw (11.center) to (12.center);
		\draw (15.center) to (16.center);
		\draw (20.center) to (22.center);
		\draw (22.center) to (21.center);
		\draw (21.center) to (19.center);
		\draw (19.center) to (20.center);
		\draw [bend left=60] (23.center) to (24.center);
		\draw [style={arrows={->[]}}] (25.center) to (26.center);
		\draw (27.center) to (28.center);
		\draw (29.center) to (30.center);
		\draw (10.center) to (18.center);
		\draw (14) to (17);
		\draw (30.center) to (33.center);
		\draw (32.center) to (28.center);
	\end{pgfonlayer}
\end{tikzpicture}
\]

\section{Partial Permutation}
\label{sec:permut}

The main tool that we will use in the following is a construction for a circuit that maps each $\ket{e_i}$ to a target classical state.

\subsection{New construction}

We propose here a construction for such a circuit that turns out to be efficient in depth and non-Clifford count. More specifically:

\begin{theorem}
	\label{thm:permut}
	Let $\chi = (\chi_i)\in\left(\{0,1\}^n\right)^s$ be a sequence of distinct length-$n$ bit strings. Let $m:=\max(s,n)$. There exists a unitary $U_\chi$ such that $U_\chi:\ket{e_i}\otimes\ket{0^{\otimes m-s}}\mapsto\ket{\chi_i}\otimes\ket{0^{\otimes m-n}}$ and which can be implemented by a circuit with $\mathcal O(s)$ T-count,  $\mathcal O(s+n)$ depth, and $\mathcal O(s(s+n))$ gate-count.
	
	There exists a Clifford circuit $C_\chi$ with qubit initialisation, measurements and classically-controlled $CZ$-gates, that maps each $\ket{e_i}$ to $\ket{\chi_i}$ with $\mathcal O(s+n)$ depth, and $\mathcal O(s(s+n))$ gate-count.
\end{theorem}

The second construction makes use of intermediary measurements to entirely rid us of the need for T gates. The presence of intermediary measurements means however that the circuit cannot be easily transposed to map the $\ket{\chi_i}$ to $\ket{e_i}$. On the other hand, the first construction, being a unitary with no measurement, can be transposed right away. The rest of this section is devoted to explaining the construction for $U_\chi$, with the circuit $C_\chi$ being a fairly direct consequence.

We work our way from the target states $\chi_i$ and show how to build states $\ket{e_i}$ out of them. Let $s$, $n$, $m$ and $\chi$ be defined as in the theorem. Build the $m\times s$ matrix $M_\chi$ whose columns are the $\ket{\chi_i}\otimes\ket{0^{\otimes m-n}}$, i.e.~the target states padded with $0$s if $s>n$. The number of rows of $M_\chi$ is hence larger or equal to its number of columns. Notice that the columns of $M_\chi$ are all distinct since all $\chi_i$ are distinct. Imagine the columns represent the classical states in the current state $\ket\psi$. Then applying classical reversible gates ($X$, $CX$ and $CCX$) to the state will amount to modifying the rows of the matrix, in the following way:
\begin{center}
	\begin{tabular}{@{}p{.45\textwidth}p{.33\textwidth}@{}}
		\toprule
		$X$ on qubit $i$ &
		$\tt row_i := row_i \oplus 1$ \\
		\midrule
		$CX$ from qubit $i$ to qubit $j$ &
		$\tt row_j := row_i \oplus row_j$ \\
		\midrule
		$CCX$ from qubits $i,j$ to qubit $k$ &
		$\tt row_k := row_k \oplus row_i \cdot row_j$ \\
		\bottomrule
	\end{tabular}
\end{center}
where $\cdot$ and $\oplus$ denote entry-wise row multiplication and modulo-$2$ addition respectively, and $\tt 1$ denotes a row with $1$ entries. Note that all these operations preserve the fact that all columns are distinct. We can also use $X$ to ensure $M_\chi$ has no $0$-column, by choosing an appropriate qubit (row) to apply it to. Such row can be found as follows: notice that any column with weight $\geq2$ cannot be turned into a $0$ column by application of $X$ on a single of its elements. Hence, picking an index that is not the index of the single non-zero element of a weight-$1$ column yields an appropriate qubit to apply $X$ to. Such index necessarily exists since $m\geq s$. In the following, we assume this has been dealt with and that $M_\chi$ has no zero-column.

The goal is then to turn the matrix into the identity matrix (or the unit matrix $\begin{pmatrix}I\\0\end{pmatrix}$ when $m>s$). To reduce the depth of the resulting circuit, we want to parallelise the gates as much as possible. This is made possible by first computing the PLUQ decomposition of matrix $M_\chi$:
\[M_\chi = PLUQ = P\underbrace{\left(
\begin{tikzpicture}
	\begin{pgfonlayer}{nodelayer}
		\node [style=none] (0) at (-0.75, 1.25) {$1$};
		\node [style=none] (1) at (1.25, -0.75) {$1$};
		\node [style=none] (3) at (-0.5, 1) {};
		\node [style=none] (4) at (1, -0.5) {};
		\node [style=none] (5) at (0.125, 0.125) {};
		\node [style=none] (6) at (-0.75, -0.75) {};
		\node [style=none] (11) at (-0.125, 0.375) {};
		\node [style=none] (12) at (-0.75, -0.25) {};
		\node [style=none] (13) at (-0.375, 0.625) {};
		\node [style=none] (14) at (-0.75, 0.25) {};
		\node [style=none] (16) at (-0.75, 1) {$\cdot$};
		\node [style=none] (22) at (0, 0.25) {};
		\node [style=none] (23) at (-0.75, -0.5) {};
		\node [style=none] (24) at (-0.25, 0.5) {};
		\node [style=none] (25) at (-0.75, 0) {};
		\node [style=none] (26) at (-0.5, 0.75) {};
		\node [style=none] (27) at (-0.75, 0.5) {};
		\node [style=none] (34) at (-0.625, 0.875) {};
		\node [style=none] (35) at (-0.75, 0.75) {};
		\node [style=none] (38) at (-0.75, -1) {};
		\node [style=none] (39) at (-0.75, -1.25) {};
		\node [style=none] (40) at (-0.75, -1.5) {};
		\node [style=none] (44) at (-0.5, -1.5) {};
		\node [style=none] (45) at (-0.25, -1.5) {};
		\node [style=none] (46) at (0, -1.5) {};
		\node [style=none] (47) at (0.25, -1.5) {};
		\node [style=none] (48) at (0.25, -1.25) {};
		\node [style=none] (49) at (0.25, -1) {};
		\node [style=none] (50) at (0.25, -0.75) {};
		\node [style=none] (51) at (0.25, -0.5) {};
		\node [style=none] (52) at (0.25, -0.25) {};
		\node [style=none] (53) at (0.25, 0) {};
		\node [style=none] (54) at (0.25, 1.25) {};
	\end{pgfonlayer}
	\begin{pgfonlayer}{edgelayer}
		\draw (4.center) to (3.center);
		\draw (5.center) to (6.center);
		\draw (11.center) to (12.center);
		\draw (13.center) to (14.center);
		\draw (22.center) to (23.center);
		\draw (24.center) to (25.center);
		\draw (26.center) to (27.center);
		\draw (34.center) to (35.center);
		\draw (53.center) to (38.center);
		\draw (52.center) to (39.center);
		\draw (51.center) to (40.center);
		\draw (50.center) to (44.center);
		\draw (49.center) to (45.center);
		\draw (48.center) to (46.center);
		\draw [style={dashed}, draw=gray] (54.center) to (47.center);
	\end{pgfonlayer}
\end{tikzpicture}
\right)}_{L}
\underbrace{\left(
\begin{tikzpicture}
	\begin{pgfonlayer}{nodelayer}
		\node [style=none] (0) at (-1.25, 1) {$1$};
		\node [style=none] (1) at (-0.25, 0) {$1$};
		\node [style=none] (3) at (-1, 0.75) {};
		\node [style=none] (4) at (-0.5, 0.25) {};
		\node [style=none] (5) at (-0.125, 0.125) {};
		\node [style=none] (6) at (0.75, 1) {};
		\node [style=none] (11) at (-0.375, 0.375) {};
		\node [style=none] (12) at (0.25, 1) {};
		\node [style=none] (13) at (-0.625, 0.625) {};
		\node [style=none] (14) at (-0.25, 1) {};
		\node [style=none] (16) at (-1, 1) {$\cdot$};
		\node [style=none] (18) at (0.75, 0.5) {};
		\node [style=none] (20) at (0.75, 0) {};
		\node [style=none] (22) at (-0.25, 0.25) {};
		\node [style=none] (23) at (0.5, 1) {};
		\node [style=none] (24) at (-0.5, 0.5) {};
		\node [style=none] (25) at (0, 1) {};
		\node [style=none] (26) at (-0.75, 0.75) {};
		\node [style=none] (27) at (-0.5, 1) {};
		\node [style=none] (28) at (0, 0) {};
		\node [style=none] (29) at (0.75, 0.75) {};
		\node [style=none] (31) at (0.75, 0.25) {};
		\node [style=none] (34) at (-0.875, 0.875) {};
		\node [style=none] (35) at (-0.75, 1) {};
		\node [style=none] (36) at (0.25, 0) {};
		\node [style=none] (37) at (0.5, 0) {};
		\node [style=none] (38) at (-1.25, 0) {};
		\node [style=none] (39) at (-0.25, -1) {};
	\end{pgfonlayer}
	\begin{pgfonlayer}{edgelayer}
		\draw (4.center) to (3.center);
		\draw (5.center) to (6.center);
		\draw (11.center) to (12.center);
		\draw (13.center) to (14.center);
		\draw (22.center) to (23.center);
		\draw (24.center) to (25.center);
		\draw (26.center) to (27.center);
		\draw (28.center) to (29.center);
		\draw (34.center) to (35.center);
		\draw (18.center) to (36.center);
		\draw (31.center) to (37.center);
		\draw [style={dashed}, draw=gray] (38.center) to (20.center);
		\draw [style={dashed}, draw=gray] (14.center) to (39.center);
	\end{pgfonlayer}
\end{tikzpicture}
\right)}_{U}Q\]
where $L$ is a unit $m\times s$ full-rank lower trapezoidal matrix, $U$ is an $s\times s$ upper triangular matrix of rank $r=\rank(M_\chi)$, and $P$ and $Q$ are permutation matrices. The presence of the two latter matrices allows us to flush all pivots of $U$ to the top left as shown above. Once the circuit induced by $L$ and $U$ is built permutation $P$ will amount to a permutation of output wires, which can be pushed through the circuit to meet the other permutation $Q$, a permutation of the input qubits (which will be modified when building the circuit for $U$). When applying the construction to a state, the permutation can be pushed to the state preparation. Otherwise, it can be built using a $CX$ circuit of depth $6$ \cite{Moore2001Parallel}, which has no effect on the asymptotic complexity in depth, Clifford--count, non-Clifford-count or T-count.

With the LU decomposition above, we can defer the goal of turning the matrix into the unit matrix to matrices $L$ and $U$. Notice that if the initial matrix is full rank, it suffices to apply Gaussian elimination to turn the matrices into unit ones, which can be done using only $CX$ gates. The $X$ and $CCX$ gates are hence only required to augment the rank of the matrix.
\smallskip

\noindent
\smallskip
\begin{minipage}{0.7\columnwidth}
	\indent $L$ being full rank, we can use it to illustrate how to arrange the $CX$ to get linear depth. The idea is to group the non-diagonal entries of $L$ into a collection of $m+r\leq \mathcal O(n+s)$ anti-diagonals. Each of these anti-diagonals can be removed at once by a depth-$1$ layer of $CX$s, as illustrated on the right, where each arrow represents a potential application of $CX$ with its tail being the source qubit and its head being the target qubit.
\end{minipage}
\hfill
\begin{minipage}{0.22\columnwidth}
	$\left(
\begin{tikzpicture}
	\begin{pgfonlayer}{nodelayer}
		\node [style=none] (0) at (-0.75, 1.25) {$1$};
		\node [style=none] (1) at (1.25, -0.75) {$1$};
		\node [style=none] (3) at (-0.5, 1) {};
		\node [style=none] (4) at (1, -0.5) {};
		\node [style=none] (16) at (-0.75, 1) {};
		\node [style=none] (17) at (-0.5, -1.5) {};
		\node [style=none] (30) at (0.5, -0.25) {};
		\node [style=none] (34) at (1.25, -1.25) {};
		\node [style=none] (35) at (1, -1.5) {};
		\node [style=none] (39) at (0.625, -0.375) {};
		\node [style=none] (40) at (-0.75, -1.5) {};
		\node [style=none] (41) at (0.25, -0.5) {};
		\node [style=none] (42) at (-0.5, -1.25) {};
		\node [style=none] (43) at (-0.25, -1) {};
		\node [style=none] (44) at (0, -0.75) {};
		\node [style=none] (45) at (0.5, 0) {};
		\node [style=none] (46) at (0.25, 0.25) {};
		\node [style=none] (47) at (-0.25, 0.75) {};
		\node [style=none] (48) at (0, 0.5) {};
		\node [style=none] (49) at (-0.25, -1.5) {};
		\node [style=none] (50) at (0.75, -0.5) {};
		\node [style=none] (51) at (0, -1.5) {};
		\node [style=none] (52) at (0.875, -0.625) {};
		\node [style=none] (53) at (0.25, -1.5) {};
		\node [style=none] (54) at (1, -0.75) {};
		\node [style=none] (55) at (0.5, -1.5) {};
		\node [style=none] (56) at (1.125, -0.875) {};
		\node [style=none] (57) at (0.75, -1.5) {};
		\node [style=none] (58) at (1.25, -1) {};
		\node [style=none] (59) at (1.25, -1.5) {$\cdot$};
	\end{pgfonlayer}
	\begin{pgfonlayer}{edgelayer}
		\draw (4.center) to (3.center);
		\draw (34.center) to (35.center);
		\draw (40.center) to (30.center);
		\draw (39.center) to (17.center);
		\draw [style={arrows={->[]}}, draw=gray] (45.center) to (30.center);
		\draw [style={arrows={->[]}}, draw=gray] (46.center) to (41.center);
		\draw [style={arrows={->[]}}, draw=gray] (3.center) to (42.center);
		\draw [style={arrows={->[]}}, draw=gray] (47.center) to (43.center);
		\draw [style={arrows={->[]}}, draw=gray] (16.center) to (40.center);
		\draw [style={arrows={->[]}}, draw=gray] (48.center) to (44.center);
		\draw (50.center) to (49.center);
		\draw (52.center) to (51.center);
		\draw (54.center) to (53.center);
		\draw (56.center) to (55.center);
		\draw (58.center) to (57.center);
	\end{pgfonlayer}
\end{tikzpicture}
\right)$
\end{minipage}\\
The $CX$ is only applied if their is a $1$ at the arrow's head. Doing so for each anti-diagonal obviously yields a $CX$ circuit of depth $\mathcal O(s+n)$ and $\mathcal O(s(s+n))$ gate-count. The algorithm is given in more details in \Cref{alg:anti-diag-rem}, in the case of an \emph{upper} triangular matrix. This process obviously cannot be applied to turn $U$ into the identity in general, since it is not full-rank. It however can serve as a basis, which will have to be interleaved with applications of $CCX$ gates to augment its rank. 

To deal with $U$, we divide our transformation into $s-r$ iterations, as follows:
\begin{gather*}
	\underbrace{\left(
\begin{tikzpicture}
	\begin{pgfonlayer}{nodelayer}
		\node [style=none] (0) at (-1.25, 1) {$1$};
		\node [style=none] (1) at (-0.25, 0) {$1$};
		\node [style=none] (3) at (-1, 0.75) {};
		\node [style=none] (4) at (-0.5, 0.25) {};
		\node [style=none] (5) at (-0.125, 0.125) {};
		\node [style=none] (6) at (0.5, 0.75) {};
		\node [style=none] (11) at (-0.375, 0.375) {};
		\node [style=none] (12) at (0, 0.75) {};
		\node [style=none] (14) at (-0.25, 1) {};
		\node [style=none] (18) at (0.75, 0.5) {};
		\node [style=none] (20) at (0.75, 0) {};
		\node [style=none] (22) at (-0.25, 0.25) {};
		\node [style=none] (23) at (0.25, 0.75) {};
		\node [style=none] (24) at (-0.5, 0.5) {};
		\node [style=none] (25) at (-0.25, 0.75) {};
		\node [style=none] (28) at (0, 0) {};
		\node [style=none] (29) at (0.75, 0.75) {};
		\node [style=none] (31) at (0.75, 0.25) {};
		\node [style=none] (36) at (0.25, 0) {};
		\node [style=none] (37) at (0.5, 0) {};
		\node [style=none] (38) at (-1.25, 0) {};
		\node [style=none] (39) at (-0.25, -1) {};
		\node [style=none] (40) at (0.75, 0.75) {};
		\node [style=none] (41) at (-1.25, 0.75) {};
		\node [style=none] (42) at (0.25, 1) {$0$};
		\node [style=none] (43) at (0.75, 0) {$\cdot$};
	\end{pgfonlayer}
	\begin{pgfonlayer}{edgelayer}
		\draw (4.center) to (3.center);
		\draw (5.center) to (6.center);
		\draw (11.center) to (12.center);
		\draw (22.center) to (23.center);
		\draw (24.center) to (25.center);
		\draw (28.center) to (29.center);
		\draw (18.center) to (36.center);
		\draw (31.center) to (37.center);
		\draw [style=dashed, draw=gray] (38.center) to (20.center);
		\draw [style=dashed, draw=gray] (14.center) to (39.center);
		\draw [style=dashed, draw=gray] (41.center) to (40.center);
	\end{pgfonlayer}
\end{tikzpicture}
\right)}_{U_r}
	\raisebox{2em}{\scriptsize \!\!$j$}
	\mapsto
	\underbrace{\left(
\begin{tikzpicture}
	\begin{pgfonlayer}{nodelayer}
		\node [style=none] (0) at (-1.25, 1) {$1$};
		\node [style=none] (1) at (0, -0.25) {$1$};
		\node [style=none] (3) at (-1, 0.75) {};
		\node [style=none] (4) at (-0.25, 0) {};
		\node [style=none] (5) at (-0.125, 0.125) {};
		\node [style=none] (6) at (0.5, 0.75) {};
		\node [style=none] (11) at (-0.375, 0.375) {};
		\node [style=none] (12) at (0, 0.75) {};
		\node [style=none] (14) at (-0.25, 1) {};
		\node [style=none] (17) at (0.125, -0.125) {};
		\node [style=none] (18) at (0.75, 0.5) {};
		\node [style=none] (20) at (0.75, 0) {};
		\node [style=none] (22) at (-0.25, 0.25) {};
		\node [style=none] (23) at (0.25, 0.75) {};
		\node [style=none] (25) at (0.25, 1) {$0$};
		\node [style=none] (28) at (0, 0) {};
		\node [style=none] (29) at (0.75, 0.75) {};
		\node [style=none] (30) at (0.25, -0.25) {};
		\node [style=none] (31) at (0.75, 0.25) {};
		\node [style=none] (38) at (-0.25, -1) {};
		\node [style=none] (39) at (-1.25, 0) {};
		\node [style=none] (40) at (0.5, -0.25) {};
		\node [style=none] (41) at (0.75, 0.75) {};
		\node [style=none] (42) at (-1.25, 0.75) {};
		\node [style=none] (43) at (0.75, -0.25) {$\cdot$};
	\end{pgfonlayer}
	\begin{pgfonlayer}{edgelayer}
		\draw (4.center) to (3.center);
		\draw (5.center) to (6.center);
		\draw (11.center) to (12.center);
		\draw (17.center) to (18.center);
		\draw (22.center) to (23.center);
		\draw (28.center) to (29.center);
		\draw (30.center) to (31.center);
		\draw [style=dashed, draw=gray] (38.center) to (14.center);
		\draw [style=dashed, draw=gray] (39.center) to (20.center);
		\draw (20.center) to (40.center);
		\draw [style=dashed, draw=gray] (42.center) to (41.center);
	\end{pgfonlayer}
\end{tikzpicture}
\right)}_{U_{r+1}}
	\mapsto
	\cdots
	\mapsto
	\underbrace{\left(
\begin{tikzpicture}
	\begin{pgfonlayer}{nodelayer}
		\node [style=none] (0) at (-1.25, 1) {$1$};
		\node [style=none] (1) at (0.75, -1) {$1$};
		\node [style=none] (3) at (-1, 0.75) {};
		\node [style=none] (4) at (0.5, -0.75) {};
		\node [style=none] (6) at (0.75, 1) {$0$};
		\node [style=none] (14) at (-0.25, 1) {};
		\node [style=none] (17) at (0.125, -0.125) {};
		\node [style=none] (18) at (0.75, 0.5) {};
		\node [style=none] (19) at (0.375, -0.375) {};
		\node [style=none] (20) at (0.75, 0) {};
		\node [style=none] (21) at (0.75, -0.75) {$\cdot$};
		\node [style=none] (28) at (0, 0) {};
		\node [style=none] (29) at (0.75, 0.75) {};
		\node [style=none] (30) at (0.25, -0.25) {};
		\node [style=none] (31) at (0.75, 0.25) {};
		\node [style=none] (32) at (0.5, -0.5) {};
		\node [style=none] (33) at (0.75, -0.25) {};
		\node [style=none] (36) at (0.625, -0.625) {};
		\node [style=none] (37) at (0.75, -0.5) {};
		\node [style=none] (38) at (-1.25, 0) {};
		\node [style=none] (39) at (-0.25, -1) {};
	\end{pgfonlayer}
	\begin{pgfonlayer}{edgelayer}
		\draw (4.center) to (3.center);
		\draw (17.center) to (18.center);
		\draw (19.center) to (20.center);
		\draw (28.center) to (29.center);
		\draw (30.center) to (31.center);
		\draw (32.center) to (33.center);
		\draw (36.center) to (37.center);
		\draw [style={dashed}, draw=gray] (14.center) to (39.center);
		\draw [style={dashed}, draw=gray] (38.center) to (20.center);
	\end{pgfonlayer}
\end{tikzpicture}
\right)}_{U_s}
\end{gather*}
First, using solely $CX$ gates, we can remove the anti-diagonals of $U$,  as long as the whole anti-diagonal is on the left of the vertical dashed line (materialising the rank $r$). While removing the anti-diagonals, it is possible that the first few rows have weight $1$ (with the $1$s on the diagonal). The first row with weight $\geq2$ is important. Let $j$ be its index (if it exists, $s+1$ otherwise)\footnote{Notice that it necessarily exists if $U$ is rank-deficient.}, and suppose it is kept updated throughout the algorithm. Notice that $j<r$ unless the matrix is already the identity. One can check that, with the following algorithm, $j$ can only increase. To get to $U_r$, we remove all the anti-diagonals starting from the left, until we reach the one that goes through index $(j,r+1)$. Then, for each iteration step $t \in \{r,\ldots,s\}$, we perform the following:
\begin{enumerate}
	\item Find a column (on the right of $t$) with entry $1$ in row $j$. Such a column necessarily has weight $\geq2$, otherwise it would either be a zero column, or equal to column $j$, which is prevented by the invariants of the matrix. There hence exists another row index $j_t\neq j$ whose entry is $1$ in the column. Permute this column with the $(t+1)$\textsuperscript{th} column, using some $s \times s$ permutation $Q_t$. Column $t+1$ is now the first (when scanned left to right) to have $1$ entries at both $j$ and $j_t$. 
	\item Apply a $CCX$ gate from rows $j$ and $j_t$, to row $(t+1)$. This transforms the $(t+1)$\textsuperscript{th} row, which was previously zero, to non-zero with a leading $1$ in the $(t+1)$\textsuperscript{th} column.
	\item Remove anti-diagonals until we reach the one that goes through index $j,t+2$.
\end{enumerate}
Observe that each matrix $U_t$ has rank $t$; thus, $U_s$ has rank $s$. Finally, applying parallel $CX$ gates to remove the remaining anti-diagonals of $U_s$, we obtain the identity matrix $I$.
In this algorithm, each anti-diagonal is removed exactly once (each removal giving a depth-$1$ $CX$ circuit), and each $CCX$ gate adds $1$ to the rank of $U$, so we need $s-r\leq s$ of them.

Since all the gates applied here are involutive, it suffices to apply them in the other direction to build circuit $U_\chi$. Overall it has depth $\mathcal O(s+n)$, $\mathcal O(s(s+n))$ gate count, and $s-r\leq\mathcal O(s)$ T-count. It uses no non-Clifford+T gate. The algorithm is given in \Cref{alg:upper-elim-comp}.

Now, each $CCX$ gate is used to turn an all-$0$ row to a non-zero row. When applied in the other direction, the target row becomes null, meaning that the corresponding qubit is set to $0$. The $CCX$ gates can hence all be replaced by the following circuit:
\[
\begin{tikzpicture}
	\begin{pgfonlayer}{nodelayer}
		\node [style=none] (0) at (-1.5, -0.5) {};
		\node [style=none] (1) at (-1.5, 0.5) {};
		\node [style=none] (2) at (1.25, 0.5) {};
		\node [style=dot] (3) at (0.25, 0.5) {};
		\node [style=none] (4) at (-1.5, 0) {};
		\node [style=none] (5) at (1.25, 0) {};
		\node [style=dot] (6) at (0.25, 0) {};
		\node [style=none] (7) at (-0.675, -0.5) {};
		\node [style=none] (8) at (-0.675, -0.325) {};
		\node [style=none] (9) at (-0.675, -0.675) {};
		\node [style=none] (10) at (-0.075, -0.325) {};
		\node [style=none] (11) at (-0.075, -0.675) {};
		\node [style=none] (12) at (-0.6, -0.6) {};
		\node [style=none] (13) at (-0.15, -0.6) {};
		\node [style=none] (14) at (-0.375, -0.6) {};
		\node [style=none] (15) at (-0.25, -0.4) {};
		\node [style=none] (16) at (-0.075, -0.475) {};
		\node [style=none] (17) at (0.225, -0.475) {};
		\node [style=none] (18) at (-0.075, -0.525) {};
		\node [style=none] (19) at (0.275, -0.525) {};
		\node [style=none] (20) at (0.225, 0.05) {};
		\node [style=none] (21) at (0.275, -0.025) {};
		\node [style=box] (22) at (-1.125, -0.5) {$H$};
		\node [style=none] (23) at (0.75, -0.5) {$\ket0$};
		\node [style=none] (24) at (1, -0.5) {};
		\node [style=none] (25) at (1.25, -0.5) {};
	\end{pgfonlayer}
	\begin{pgfonlayer}{edgelayer}
		\draw (1.center) to (2.center);
		\draw (4.center) to (5.center);
		\draw (9.center) to (11.center);
		\draw (11.center) to (10.center);
		\draw (10.center) to (8.center);
		\draw (8.center) to (9.center);
		\draw [bend left=60] (12.center) to (13.center);
		\draw [style={arrows={->[]}}] (14.center) to (15.center);
		\draw (16.center) to (17.center);
		\draw (18.center) to (19.center);
		\draw (0.center) to (7.center);
		\draw (3) to (6);
		\draw (19.center) to (21.center);
		\draw (20.center) to (17.center);
		\draw (25.center) to (24.center);
	\end{pgfonlayer}
\end{tikzpicture}
\]

\begin{example}
	\label{ex:permut}
	Suppose $\chi = \{100001, 010010, 001101, 110011, 101100, 011111, 111110\}$. We then have:
	\[M_\chi = \begin{pmatrix}
		1&0&0&1&1&0&1\\
		0&1&0&1&0&1&1\\
		0&0&1&0&1&1&1\\
		0&0&1&0&1&1&1\\
		0&1&0&1&0&1&1\\
		1&0&1&1&0&1&0\\
		0&0&0&0&0&0&0
	\end{pmatrix}
	=
	\begin{pmatrix}
		1& & & & & & \\
		 &1& & & & & \\
		 & &1& & & & \\
		 & &1&1& & & \\
		 &1& & &1& & \\
		1& &1& & &1& \\
		 & & & & & &1
	\end{pmatrix}
	\begin{pmatrix}
		1& & &1&1& &1\\
		 &1& &1& &1&1\\
		 & &1& &1&1&1\\
		 & & & & & & \\
		 & & & & & & \\
		 & & & & & & \\
		 & & & & & & 
	\end{pmatrix}\]
	The detail of the execution of the algorithm on $L$ and $U$ can be found at \Cref{sec:details-example-permut}. 
	In this example, no permutation arises from either the PLUQ decomposition or from the circuit construction from $U$. Aggregating the gates from the end to the start, we get one of the two following circuits, depending whether we allow intermediary measurements or not:
	\[\scalebox{0.65}{
\begin{tikzpicture}
	\begin{pgfonlayer}{nodelayer}
		\node [style=none] (0) at (-3, 1) {};
		\node [style=none] (1) at (-3, 1.5) {};
		\node [style=none] (2) at (4.75, 1.5) {};
		\node [style=none] (3) at (-2.375, 0) {$\oplus$};
		\node [style=none] (4) at (4.75, 1) {};
		\node [style=dot] (5) at (-2.375, -1.5) {};
		\node [style=none] (6) at (-3, 0) {};
		\node [style=none] (7) at (-3, 0.5) {};
		\node [style=none] (8) at (4.75, 0.5) {};
		\node [style=none] (9) at (4.75, 0) {};
		\node [style=none] (10) at (-3, -1) {};
		\node [style=none] (11) at (-3, -0.5) {};
		\node [style=none] (12) at (4.75, -0.5) {};
		\node [style=none] (13) at (4.75, -1) {};
		\node [style=none] (15) at (-3, -1.5) {};
		\node [style=none] (16) at (4, -1.5) {};
		\node [style=none] (17) at (4.25, -1) {$\oplus$};
		\node [style=dot] (18) at (4.25, 0.5) {};
		\node [style=none] (19) at (3.75, 0) {$\oplus$};
		\node [style=dot] (20) at (3.75, 0.5) {};
		\node [style=none] (21) at (3.75, -0.5) {$\oplus$};
		\node [style=dot] (22) at (3.75, 1) {};
		\node [style=none] (23) at (3.75, -1) {$\oplus$};
		\node [style=dot] (24) at (3.75, 1.5) {};
		\node [style=none] (25) at (2.75, 0) {$\oplus$};
		\node [style=dot] (26) at (2.75, 1.5) {};
		\node [style=dot] (27) at (2.75, 1) {};
		\node [style=none] (28) at (2.25, 1.5) {$\oplus$};
		\node [style=dot] (29) at (2.25, 0) {};
		\node [style=none] (30) at (1.75, -0.5) {$\oplus$};
		\node [style=dot] (31) at (1.75, 1.5) {};
		\node [style=dot] (32) at (1.75, 0.5) {};
		\node [style=none] (33) at (1.25, 1) {$\oplus$};
		\node [style=dot] (34) at (1.25, 0) {};
		\node [style=none] (35) at (1.25, 1.5) {$\oplus$};
		\node [style=dot] (36) at (1.25, -0.5) {};
		\node [style=none] (37) at (0.5, -1) {$\oplus$};
		\node [style=dot] (38) at (0.5, 1) {};
		\node [style=dot] (39) at (0.5, 0.5) {};
		\node [style=none] (40) at (0, 0.5) {$\oplus$};
		\node [style=dot] (41) at (0, -0.5) {};
		\node [style=none] (42) at (0, 1) {$\oplus$};
		\node [style=dot] (43) at (0, -1) {};
		\node [style=none] (44) at (-0.75, 0.5) {$\oplus$};
		\node [style=dot] (45) at (-0.75, -1) {};
		\node [style=none] (46) at (-1.25, -1.5) {$\oplus$};
		\node [style=dot] (47) at (-1.25, 0.5) {};
		\node [style=dot] (48) at (-1.25, 0) {};
		\node [style=none] (49) at (-1.75, 0.5) {$\oplus$};
		\node [style=dot] (50) at (-1.75, -1.5) {};
		\node [style=none] (51) at (3.25, -1.75) {};
		\node [style=none] (52) at (3.25, 1.75) {};
		\node [style=none] (53) at (0, 1.75) {$U$};
		\node [style=none] (54) at (4, 1.75) {$L$};
		\node [style=none] (55) at (4.5, -1.5) {$\ket0$};
	\end{pgfonlayer}
	\begin{pgfonlayer}{edgelayer}
		\draw (1.center) to (2.center);
		\draw (4.center) to (0.center);
		\draw (3.center) to (5);
		\draw (7.center) to (8.center);
		\draw (9.center) to (6.center);
		\draw (11.center) to (12.center);
		\draw (13.center) to (10.center);
		\draw (15.center) to (16.center);
		\draw (17.center) to (18);
		\draw (19.center) to (20);
		\draw [bend left] (21.center) to (22);
		\draw [bend left] (23.center) to (24);
		\draw (25.center) to (26);
		\draw (28.center) to (29);
		\draw (30.center) to (31);
		\draw (33.center) to (34);
		\draw [bend right] (35.center) to (36);
		\draw (37.center) to (38);
		\draw (40.center) to (41);
		\draw [bend right] (42.center) to (43);
		\draw (44.center) to (45);
		\draw (46.center) to (47);
		\draw (49.center) to (50);
		\draw [style=dashed, draw=gray] (52.center) to (51.center);
	\end{pgfonlayer}
\end{tikzpicture}
}\qquad\scalebox{0.65}{
\begin{tikzpicture}
	\begin{pgfonlayer}{nodelayer}
		\node [style=none] (0) at (-8.25, 1) {};
		\node [style=none] (1) at (-8.25, 1.5) {};
		\node [style=none] (2) at (3.5, 1.5) {};
		\node [style=none] (3) at (-7.625, 0) {$\oplus$};
		\node [style=none] (4) at (3.5, 1) {};
		\node [style=dot] (5) at (-7.625, -1.5) {};
		\node [style=none] (7) at (-8.25, 0.5) {};
		\node [style=none] (8) at (3.5, 0.5) {};
		\node [style=none] (16) at (3, -1) {$\oplus$};
		\node [style=dot] (17) at (3, 0.5) {};
		\node [style=none] (18) at (2.5, 0) {$\oplus$};
		\node [style=dot] (19) at (2.5, 0.5) {};
		\node [style=none] (20) at (2.5, -0.5) {$\oplus$};
		\node [style=dot] (21) at (2.5, 1) {};
		\node [style=none] (22) at (2.5, -1) {$\oplus$};
		\node [style=dot] (23) at (2.5, 1.5) {};
		\node [style=none] (27) at (-0.5, 1.5) {$\oplus$};
		\node [style=dot] (28) at (-0.5, 0) {};
		\node [style=none] (32) at (-2.375, 1) {$\oplus$};
		\node [style=dot] (33) at (-2.375, 0) {};
		\node [style=none] (34) at (-2.375, 1.5) {$\oplus$};
		\node [style=dot] (35) at (-2.375, -0.5) {};
		\node [style=none] (39) at (-4.375, 0.5) {$\oplus$};
		\node [style=dot] (40) at (-4.375, -0.5) {};
		\node [style=none] (41) at (-4.375, 1) {$\oplus$};
		\node [style=dot] (42) at (-4.375, -1) {};
		\node [style=none] (43) at (-5.125, 0.5) {$\oplus$};
		\node [style=dot] (44) at (-5.125, -1) {};
		\node [style=none] (48) at (-7, 0.5) {$\oplus$};
		\node [style=dot] (49) at (-7, -1.5) {};
		\node [style=none] (50) at (2, -1.5) {};
		\node [style=none] (51) at (2, 1.75) {};
		\node [style=none] (52) at (-4.375, 1.75) {$U$};
		\node [style=none] (53) at (2.75, 1.75) {$L$};
		\node [style=none] (54) at (1.75, 0) {};
		\node [style=none] (55) at (3.5, 0) {};
		\node [style=none] (62) at (-8.25, -1) {};
		\node [style=dot] (65) at (-3, 1) {};
		\node [style=dot] (68) at (-3, 0.5) {};
		\node [style=none] (69) at (-3.3, -1) {};
		\node [style=none] (70) at (-3.3, -0.825) {};
		\node [style=none] (71) at (-3.3, -1.175) {};
		\node [style=none] (72) at (-2.7, -0.825) {};
		\node [style=none] (73) at (-2.7, -1.175) {};
		\node [style=none] (74) at (-3.225, -1.1) {};
		\node [style=none] (75) at (-2.775, -1.1) {};
		\node [style=none] (76) at (-3, -1.1) {};
		\node [style=none] (77) at (-2.875, -0.9) {};
		\node [style=none] (79) at (-3.025, -0.825) {};
		\node [style=none] (81) at (-2.975, -0.825) {};
		\node [style=none] (82) at (-3.025, 0.55) {};
		\node [style=none] (83) at (-2.975, 0.475) {};
		\node [style=box] (84) at (-3.75, -1) {$H$};
		\node [style=none] (85) at (-8.25, -1.5) {};
		\node [style=dot] (86) at (-5.625, 0.5) {};
		\node [style=dot] (87) at (-5.625, 0) {};
		\node [style=none] (88) at (-5.925, -1.5) {};
		\node [style=none] (89) at (-5.925, -1.325) {};
		\node [style=none] (90) at (-5.925, -1.675) {};
		\node [style=none] (91) at (-5.325, -1.325) {};
		\node [style=none] (92) at (-5.325, -1.675) {};
		\node [style=none] (93) at (-5.85, -1.6) {};
		\node [style=none] (94) at (-5.4, -1.6) {};
		\node [style=none] (95) at (-5.625, -1.6) {};
		\node [style=none] (96) at (-5.5, -1.4) {};
		\node [style=none] (98) at (-5.65, -1.325) {};
		\node [style=none] (100) at (-5.6, -1.325) {};
		\node [style=none] (101) at (-5.65, 0.05) {};
		\node [style=none] (102) at (-5.6, -0.025) {};
		\node [style=box] (103) at (-6.375, -1.5) {$H$};
		\node [style=none] (104) at (-8.25, -0.5) {};
		\node [style=dot] (105) at (-1, 1.5) {};
		\node [style=dot] (106) at (-1, 0.5) {};
		\node [style=none] (107) at (-1.3, -0.5) {};
		\node [style=none] (108) at (-1.3, -0.325) {};
		\node [style=none] (109) at (-1.3, -0.675) {};
		\node [style=none] (110) at (-0.7, -0.325) {};
		\node [style=none] (111) at (-0.7, -0.675) {};
		\node [style=none] (112) at (-1.225, -0.6) {};
		\node [style=none] (113) at (-0.775, -0.6) {};
		\node [style=none] (114) at (-1, -0.6) {};
		\node [style=none] (115) at (-0.875, -0.4) {};
		\node [style=none] (117) at (-1.025, -0.325) {};
		\node [style=none] (119) at (-0.975, -0.325) {};
		\node [style=none] (120) at (-1.025, 0.55) {};
		\node [style=none] (121) at (-0.975, 0.475) {};
		\node [style=box] (122) at (-1.75, -0.5) {$H$};
		\node [style=none] (123) at (-8.25, 0) {};
		\node [style=dot] (124) at (0.875, 1.5) {};
		\node [style=dot] (125) at (0.875, 1) {};
		\node [style=none] (126) at (0.575, 0) {};
		\node [style=none] (127) at (0.575, 0.175) {};
		\node [style=none] (128) at (0.575, -0.175) {};
		\node [style=none] (129) at (1.175, 0.175) {};
		\node [style=none] (130) at (1.175, -0.175) {};
		\node [style=none] (131) at (0.65, -0.1) {};
		\node [style=none] (132) at (1.1, -0.1) {};
		\node [style=none] (133) at (0.875, -0.1) {};
		\node [style=none] (134) at (1, 0.1) {};
		\node [style=none] (136) at (0.85, 0.175) {};
		\node [style=none] (138) at (0.9, 0.175) {};
		\node [style=none] (139) at (0.85, 1.05) {};
		\node [style=none] (140) at (0.9, 0.975) {};
		\node [style=box] (141) at (0.125, 0) {$H$};
		\node [style=none] (142) at (1.5, 0) {$\ket0$};
		\node [style=none] (143) at (1.75, -0.5) {};
		\node [style=none] (144) at (3.5, -0.5) {};
		\node [style=none] (145) at (1.5, -0.5) {$\ket0$};
		\node [style=none] (146) at (1.75, -1) {};
		\node [style=none] (147) at (3.5, -1) {};
		\node [style=none] (148) at (1.5, -1) {$\ket0$};
	\end{pgfonlayer}
	\begin{pgfonlayer}{edgelayer}
		\draw (1.center) to (2.center);
		\draw (4.center) to (0.center);
		\draw (3.center) to (5);
		\draw (7.center) to (8.center);
		\draw (16.center) to (17);
		\draw (18.center) to (19);
		\draw [bend left] (20.center) to (21);
		\draw [bend left] (22.center) to (23);
		\draw (27.center) to (28);
		\draw (32.center) to (33);
		\draw [bend right] (34.center) to (35);
		\draw (39.center) to (40);
		\draw [bend right] (41.center) to (42);
		\draw (43.center) to (44);
		\draw (48.center) to (49);
		\draw [style=dashed, draw=gray] (51.center) to (50.center);
		\draw (55.center) to (54.center);
		\draw (71.center) to (73.center);
		\draw (73.center) to (72.center);
		\draw (72.center) to (70.center);
		\draw (70.center) to (71.center);
		\draw [bend left=60] (74.center) to (75.center);
		\draw [style={arrows={->[]}}] (76.center) to (77.center);
		\draw (62.center) to (69.center);
		\draw (65) to (68);
		\draw (81.center) to (83.center);
		\draw (82.center) to (79.center);
		\draw (90.center) to (92.center);
		\draw (92.center) to (91.center);
		\draw (91.center) to (89.center);
		\draw (89.center) to (90.center);
		\draw [bend left=60] (93.center) to (94.center);
		\draw [style={arrows={->[]}}] (95.center) to (96.center);
		\draw (85.center) to (88.center);
		\draw (86) to (87);
		\draw (100.center) to (102.center);
		\draw (101.center) to (98.center);
		\draw (109.center) to (111.center);
		\draw (111.center) to (110.center);
		\draw (110.center) to (108.center);
		\draw (108.center) to (109.center);
		\draw [bend left=60] (112.center) to (113.center);
		\draw [style={arrows={->[]}}] (114.center) to (115.center);
		\draw (104.center) to (107.center);
		\draw (105) to (106);
		\draw (119.center) to (121.center);
		\draw (120.center) to (117.center);
		\draw (128.center) to (130.center);
		\draw (130.center) to (129.center);
		\draw (129.center) to (127.center);
		\draw (127.center) to (128.center);
		\draw [bend left=60] (131.center) to (132.center);
		\draw [style={arrows={->[]}}] (133.center) to (134.center);
		\draw (123.center) to (126.center);
		\draw (124) to (125);
		\draw (138.center) to (140.center);
		\draw (139.center) to (136.center);
		\draw (144.center) to (143.center);
		\draw (147.center) to (146.center);
	\end{pgfonlayer}
\end{tikzpicture}
}\]
\end{example}

\subsection{Special case}

As explained above, if we want to build the permutation in the other direction from the previous theorem, we have to use $U_\chi^\dag$ which requires $\mathcal O(s)$ T-gates. There is a special case, of interest to us, where the number of $T$-gates can be reduced. This special case is when $n = \lceil \log_2(s)\rceil$ and $\chi_i = i$ written in binary. To build a circuit that performs $\ket{i}\mapsto\ket{e_i}$, we can make use of the following result:

\begin{theorem}[\cite{Low2024trading}]
	\label{thm:Low}
	Let $f:\{0,1\}^n \to \{0,1\}^b$ be a multivalued boolean function, and $U$ the unitary such that $U_f:\ket{x,y}\mapsto\ket{x,y\oplus f(x)}$ 
	for all $x\in\{0,1\}^n$ and $y\in\{0,1\}^b$. Then, $U_f$ can be synthesised using either:
	\begin{itemize}
		\item a circuit with $\mathcal O(b2^n)$ (dirty) ancillae, $\mathcal O(\sqrt{2^n})$ T gates, and $\mathcal O(n^2+\log_2(b))$ depth
		\item a circuit with $\mathcal O(\lambda\sqrt{b2^n})$ (dirty) ancillae, $\mathcal O(\sqrt{b2^n})$ T gates, and $\mathcal O(\sqrt{b2^n}/\lambda + n^2)$ depth
	\end{itemize}
\end{theorem}

\begin{corollary}
	\label{cor:i-to-ei}
	Let $s\in\mathbb N$ and $\chi = (i)_{0\leq i< s}$ where each $i$ is encoded on $\lceil \log_2(s)\rceil$ bits. A circuit that performs $\ket{i}\mapsto \ket{e_i}$ can be built using $\mathcal O(\sqrt{s})$ T-gates, $\mathcal O(s^2)$ dirty ancillae, and $\mathcal O(\log_2(s)^2)$ depth.
\end{corollary}

Notice that using the second construction from \Cref{thm:Low} yields no better T-count than ours, but still requires ancillae. Hence, if ancilla count is a priority, once shall use the construction from the previous section.

\section{W-State Preparation}
\label{sec:w-state}

We now aim to use the previous construction to prepare sparse states. Since it maps $\ket{e_i}$ to arbitrary target classical outputs, it becomes natural to build states of the form $\sum_i\alpha_i\ket{e_i}$. These are called weighted $W$ states. The $W$ state emerged as the canonical representative of the only two non-trivial classes of entanglement on 3 qubits \cite{Dur2000Slocc}. Generalised to $n$-qubits, it is expressed as:
\begin{align*}
	\ket{W_n} 
	&:= 
	\frac{1}{\sqrt{n}}
	\Big(
	\ket{10...0}+\ket{01...0}+...+\ket{00...1}
	\Big) = \frac{1}{\sqrt{n}}\sum_i\ket{e_i}
\end{align*}
And further generalised to arbitrary weights $\vec\alpha := (\alpha_1,\ldots,\alpha_n) \in \mathbb{C}^n$:
\begin{align*}
	\ket{W(\vec\alpha)}
	&:=\frac1{\sqrt{\sum_i |\alpha_i|^2}}\Big(
	\alpha_1\ket{10...0} +
	\alpha_2\ket{01...0} +
	\ldots +
	\alpha_n\ket{00...1}\Big) = \frac1{\sqrt{\sum_i |\alpha_i|^2}}\sum_i\alpha_i\ket{e_i}
\end{align*}
Of course, the first is a special case of the second: $\ket{W_n}=\ket{W(1,...,1)}$. 
We give in the following two different constructions for this family of states. The first uses the following inductive redefinition:
\begin{align*}
	&\ket{W(\rho e^{i\phi})} = e^{i\phi}\ket1 \\
	&\ket{W(\vec\alpha\oplus\vec\beta)} =
	\frac1{\sqrt{w(\vec\alpha)+w(\vec\beta)}}
	\left(
	\sqrt{w(\vec\alpha)}\ket{W(\vec\alpha)}\ket0^{\otimes |\vec\beta|} +
	\sqrt{w(\vec\beta)}\ket0^{\otimes|\vec\alpha|}\ket{W(\vec\beta)}
	\right)
\end{align*}
where $w(\vec\alpha):=\sum_i |\alpha_i|^2$ and $w(\vec\beta):=\sum_i |\beta_i|^2$, $\vec\alpha\oplus\vec\beta$ is the concatenation of $\vec\alpha$ and $\vec\beta$, and $|\vec\alpha|$ and $|\vec\beta|$ are the cardinalities of $\vec\alpha$ and $\vec\beta$. This will yield an exact, ancilla-free construction. The second method will start from a full compact state preparation that will be modified into a $W$-state. It will be more amenable to Clifford+T approximations, but will make use of ancillae.

\subsection{Tree-based construction}

We start by explaining the exact construction that uses the inductive definition above. This construction is not new \cite{Johri2021Nearest}, but we dive a bit further in the analysis of resources its requires.

The construction will be explained in all generality for arbitrary weights, and the result will of course be specialisable to the uniform case. The structure of the result will be based on a tree whose leaves bear the weights.

We define a tree $t$ as either a leaf $\bot_i$ with an amplitude $w_i\in\mathbb R^+$ and a phase $\phi_i\in\mathbb R^+$ for some index $i$; or a pair of trees $(t_1,t_2)$. The height $h(t)$, leaf count $\ell(t)$ and weight $w(t)$ of a tree $t$ are easily defined inductively as:
\begin{align*}
	\begin{array}{rl@{\qquad\quad}l}
		h(\bot_i) &= 0
		& \ell(\bot_i) = 1\quad
		 w(\bot_i) = w_i\\[0.5em]
		h((t_1,t_2)) &= \max(h(t_1),h(t_2))+1
		& \operatorname{op}((t_1,t_2)) = \operatorname{op}(t_1)+\operatorname{op}(t_2)
		~\text{ for }~\operatorname{op}\in\{\ell,w\}
	\end{array}
\end{align*}

Given a tree $t$, we build a corresponding circuit $C(t)$, with $1$ input qubit and $\ell(t)$ output qubits, inductively as follows:
\begin{gather*}
	\bot_i \mapsto 
\begin{tikzpicture}
	\begin{pgfonlayer}{nodelayer}
		\node [style=none] (0) at (-0.75, 0) {};
		\node [style=box] (1) at (0, 0) {$P(\phi_i)$};
		\node [style=none] (2) at (0.75, 0) {};
	\end{pgfonlayer}
	\begin{pgfonlayer}{edgelayer}
		\draw (2.center) to (0.center);
	\end{pgfonlayer}
\end{tikzpicture}

	\qquad\qquad
	(t_1,t_2) \mapsto 
\begin{tikzpicture}
	\begin{pgfonlayer}{nodelayer}
		\node [style=none] (0) at (0, -0.125) {};
		\node [style=none] (1) at (1.5, -0.125) {};
		\node [style=none] (2) at (0, -0.875) {};
		\node [style=none] (3) at (1.5, -0.875) {};
		\node [style=none] (4) at (0.75, -0.5) {$C(t_2)$};
		\node [style=none] (5) at (0, -0.5) {};
		\node [style=none] (6) at (-0.25, -0.5) {};
		\node [style=none] (7) at (1.75, -0.75) {};
		\node [style=none] (8) at (1.5, -0.75) {};
		\node [style=none] (9) at (1.75, -0.25) {};
		\node [style=none] (10) at (1.5, -0.25) {};
		\node [style=none] (11) at (0, 0.875) {};
		\node [style=none] (12) at (1.5, 0.875) {};
		\node [style=none] (13) at (0, 0.125) {};
		\node [style=none] (14) at (1.5, 0.125) {};
		\node [style=none] (15) at (0.75, 0.5) {$C(t_1)$};
		\node [style=none] (16) at (0, 0.5) {};
		\node [style=none] (17) at (-0.25, 0.5) {};
		\node [style=none] (18) at (1.75, 0.25) {};
		\node [style=none] (19) at (1.5, 0.25) {};
		\node [style=none] (20) at (1.75, 0.75) {};
		\node [style=none] (21) at (1.5, 0.75) {};
		\node [style=none] (22) at (1.65, 0.6) {$\vdots$};
		\node [style=none] (23) at (1.65, -0.4) {$\vdots$};
		\node [style=none] (24) at (-1.5, 0.625) {};
		\node [style=none] (25) at (-0.25, 0.625) {};
		\node [style=none] (26) at (-1.5, -0.625) {};
		\node [style=none] (27) at (-0.25, -0.625) {};
		\node [style=none] (28) at (-0.875, 0) {$F_{\frac{w(t_2)}{w(t)}}$};
		\node [style=none] (29) at (-1.5, 0) {};
		\node [style=none] (30) at (-1.75, 0) {};
	\end{pgfonlayer}
	\begin{pgfonlayer}{edgelayer}
		\draw (3.center) to (1.center);
		\draw (1.center) to (0.center);
		\draw (0.center) to (2.center);
		\draw (2.center) to (3.center);
		\draw (6.center) to (5.center);
		\draw (8.center) to (7.center);
		\draw (10.center) to (9.center);
		\draw (14.center) to (12.center);
		\draw (12.center) to (11.center);
		\draw (11.center) to (13.center);
		\draw (13.center) to (14.center);
		\draw (17.center) to (16.center);
		\draw (19.center) to (18.center);
		\draw (21.center) to (20.center);
		\draw (27.center) to (25.center);
		\draw (25.center) to (24.center);
		\draw (24.center) to (26.center);
		\draw (26.center) to (27.center);
		\draw (30.center) to (29.center);
	\end{pgfonlayer}
\end{tikzpicture}
\\
	\text{where}
	\qquad
	
\begin{tikzpicture}
	\begin{pgfonlayer}{nodelayer}
		\node [style=none] (0) at (-1.25, 0.375) {};
		\node [style=none] (1) at (-0.75, 0.375) {};
		\node [style=none] (2) at (-1.25, -0.375) {};
		\node [style=none] (3) at (-0.75, -0.375) {};
		\node [style=none] (4) at (-1, 0) {$F_p$};
		\node [style=none] (5) at (-1.25, 0) {};
		\node [style=none] (6) at (-1.5, 0) {};
		\node [style=none] (7) at (-0.5, -0.25) {};
		\node [style=none] (8) at (-0.75, -0.25) {};
		\node [style=none] (9) at (-0.5, 0.25) {};
		\node [style=none] (10) at (-0.75, 0.25) {};
	\end{pgfonlayer}
	\begin{pgfonlayer}{edgelayer}
		\draw (3.center) to (1.center);
		\draw (1.center) to (0.center);
		\draw (0.center) to (2.center);
		\draw (2.center) to (3.center);
		\draw (6.center) to (5.center);
		\draw (8.center) to (7.center);
		\draw (10.center) to (9.center);
	\end{pgfonlayer}
\end{tikzpicture}
~~=~~
\begin{tikzpicture}
	\begin{pgfonlayer}{nodelayer}
		\node [style=none] (0) at (-1.25, 0.25) {};
		\node [style=none] (1) at (-1.75, -0.25) {};
		\node [style=none] (2) at (0.75, -0.25) {};
		\node [style=box] (3) at (-0.5, 0.25) {$R_Y(\theta)$};
		\node [style=none] (4) at (-1.5, 0.25) {$\ket0$};
		\node [style=none] (5) at (0.75, 0.25) {};
		\node [style=dot] (6) at (-0.5, -0.25) {};
		\node [style=dot] (7) at (0.25, 0.25) {};
		\node [style=none] (8) at (0.25, -0.25) {$\oplus$};
	\end{pgfonlayer}
	\begin{pgfonlayer}{edgelayer}
		\draw (1.center) to (2.center);
		\draw (5.center) to (0.center);
		\draw (3) to (6);
		\draw (7) to (8.center);
	\end{pgfonlayer}
\end{tikzpicture}

	\qquad
	\text{with}
	\qquad
	\theta := 2\arccos(\sqrt{p})
\end{gather*}
The base component of the construction is the circuit $F_p$, which was defined in \cite{Diker2022Deterministic,Johri2021Nearest} and generalized to qudits in \cite{Yeh2023Scaling}. It is a circuit parametrized by $p\in[0,1]$, where $p$ is chosen to be the ratio between the weight $w(t_2)$ of subtree $t_2$ and the overall weight $w(t)$ of tree $t$. The circuit implements a $1$-qubit input, $2$-qubit output isometry map, with the following matrix and Dirac notation representations:
\begin{gather*}
	
\begin{tikzpicture}
	\begin{pgfonlayer}{nodelayer}
		\node [style=none] (0) at (-1.25, 0.375) {};
		\node [style=none] (1) at (-0.75, 0.375) {};
		\node [style=none] (2) at (-1.25, -0.375) {};
		\node [style=none] (3) at (-0.75, -0.375) {};
		\node [style=none] (4) at (-1, 0) {$F_p$};
		\node [style=none] (5) at (-1.25, 0) {};
		\node [style=none] (6) at (-1.5, 0) {};
		\node [style=none] (7) at (-0.5, -0.25) {};
		\node [style=none] (8) at (-0.75, -0.25) {};
		\node [style=none] (9) at (-0.5, 0.25) {};
		\node [style=none] (10) at (-0.75, 0.25) {};
	\end{pgfonlayer}
	\begin{pgfonlayer}{edgelayer}
		\draw (3.center) to (1.center);
		\draw (1.center) to (0.center);
		\draw (0.center) to (2.center);
		\draw (2.center) to (3.center);
		\draw (6.center) to (5.center);
		\draw (8.center) to (7.center);
		\draw (10.center) to (9.center);
	\end{pgfonlayer}
\end{tikzpicture}

	~~=~~
	\begin{pmatrix}
		1&0\\0&\sqrt p\\0&\sqrt{1-p}\\0&0
	\end{pmatrix}
	~~=~~
	\ketbra{00}{0}+\sqrt{p}\ketbra{01}{1}+\sqrt{1-p}\ketbra{10}{1}
\end{gather*}

This circuit allows us to build the desired state, and its metrics depend on the choice of the tree:

\begin{proposition}
\label{prop:semantics-W-prep-weight}
Suppose $\vec\alpha := (\rho_ie^{i\phi_k})_{k \in \{1,\ldots,n\}}\in \mathbb{C}^n$ is a list of complex numbers in polar form, for some $n \in \mathbb{N}$, and $t$ is an $n$-leaves tree with weights $\vec{w} := (\rho_k^2)_{k \in \{1,\ldots,n\}}$ and phases $\vec\phi := (\phi_k)_{k \in \{1,\ldots,n\}}$. Let $h(t)$ be the height of the tree. Then:
\begin{itemize}
	\item The circuit $C(t)$ maps $\ket1$ to $\ket{W(\vec\alpha)}$.
	\item The depth of circuit $C(t)$ is $\mathcal O(h(t))$ and its gate count is $\mathcal O(n)$.
\end{itemize}
\end{proposition}

As a consequence, by choosing the appropriate tree:

\begin{corollary}
\label{thm:Wn-weight}
For any $n \in \mathbb{N}$ and $\vec\alpha \in \mathbb{C}^n$, the $n$-qubit weighted W-state $\ket{W(\vec\alpha)}$ can be prepared by a circuit with $\mathcal{O}(n)$ size, $\mathcal{O}(\log_2(n))$ depth, and $\mathcal{O}(n)$ non-Clifford gates.
\end{corollary}

All $R_Y$ gates used here use arbitrary angles, and are hence outside the Clifford+T gate set. There exist schemes to compile them down to Clifford+T (e.g.~\cite{Selinger2015Efficient}), but before we do so, it is worth checking when exactly is $F_p$ in this gate set.

\begin{proposition}
	\label{prop:Fp-resources}
	The circuit $F_p$ can be implemented in:
	\begin{itemize}
		\item the Clifford gate-set iff $p\in\{0,1\}$
		\item the Clifford+T gate set iff $p\in\{0,\frac12,1\}$
	\end{itemize}
\end{proposition}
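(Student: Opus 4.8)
The plan is to treat the two ``iff''s separately. In both cases the right-to-left implications are exactly the observations made in the paragraph preceding the statement: for $p\in\{0,1\}$ the map $F_p$ equals the ancilla-appending isometry $I\otimes\ket0$ or $\ket0\otimes I$, which lies in the Clifford gate set; and for $p=\tfrac12$ one has $\theta=2\arccos(\sqrt{1/2})=\tfrac\pi2$, so that $F_{1/2}$ is obtained from the decomposition of $F_p$ by replacing the $CR_Y$ gate with a $CH$ gate, and $CH$ decomposes in Clifford$+$T as recalled in \Cref{section : preliminaries}. So the work lies in the two left-to-right directions, which I would establish by two different structural arguments.

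For the Clifford case I would use that the image of a Clifford isometry is a stabilizer code of the appropriate dimension. A Clifford circuit realizing $F_p$ has one input wire and one ancilla, so after moving the ancilla preparation to the front it equals $U\circ(\ket0\otimes I)$ (or $U\circ(I\otimes\ket0)$) for some two-qubit Clifford unitary $U$; hence $\operatorname{Im}(F_p)=U(\ket0\otimes\mathbb C^2)$ is the $(+1)$-eigenspace of the Hermitian Pauli $P:=UZ_1U^\dagger$ (resp.\ $UZ_2U^\dagger$), whose $(+1)$-eigenspace is $2$-dimensional so that $P\neq\pm I$. Since $\ket{00}=F_p\ket0\in\operatorname{Im}(F_p)$ we get $P\ket{00}=\ket{00}$, which forces $P\in\{Z_1,Z_2,Z_1Z_2\}$; the corresponding $(+1)$-eigenspaces are $\operatorname{span}(\ket{00},\ket{01})$, $\operatorname{span}(\ket{00},\ket{10})$ and $\operatorname{span}(\ket{00},\ket{11})$. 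But $\operatorname{Im}(F_p)=\operatorname{span}\bigl(\ket{00},\sqrt p\,\ket{01}+\sqrt{1-p}\,\ket{10}\bigr)$, and when $0<p<1$ the second generator has nonzero components on both $\ket{01}$ and $\ket{10}$, so this subspace is none of the three. Hence $p\in\{0,1\}$.

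For the Clifford$+$T case I would instead track the ring in which the matrix entries live. Each generator $H$, $CX$, $T$ and the state $\ket0$ has all its entries in $R:=\mathbb Z[\tfrac1{\sqrt2},i]$, and since $R$ is a subring of $\mathbb C$ closed under the operations that build circuits (matrix product and tensor product), every Clifford$+$T circuit has a matrix over $R$. If one realizes $F_p$, then the entries $\sqrt p,\sqrt{1-p}$ lie in $R$, hence (being real and nonnegative) in $R\cap\mathbb R=\mathbb Z[\tfrac1{\sqrt2}]$. It then remains to prove the arithmetic fact: if $x,y\in\mathbb Z[\tfrac1{\sqrt2}]$ are nonnegative with $x^2+y^2=1$, then $(x,y)\in\{(1,0),(0,1),(\tfrac1{\sqrt2},\tfrac1{\sqrt2})\}$; applied to $x=\sqrt p$, $y=\sqrt{1-p}$ this yields $p\in\{0,\tfrac12,1\}$.

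To prove this fact I would write $x=\alpha/(\sqrt2)^{m}$ and $y=\beta/(\sqrt2)^{m}$ with $\alpha,\beta\in\mathbb Z[\sqrt2]$ and $m\ge0$ minimal, so that $\alpha^2+\beta^2=2^m$; expanding $\alpha=a+b\sqrt2$, $\beta=c+d\sqrt2$ with integers $a,b,c,d$ and separating the rational and $\sqrt2$-parts turns this into $ab+cd=0$ and $a^2+2b^2+c^2+2d^2=2^m$. Minimality of $m$ prevents $\sqrt2$ from dividing both $\alpha$ and $\beta$; reducing modulo $\sqrt2$ (whose residue field is $\mathbb F_2$) forces $a$ and $c$ to be odd once $m\ge1$, after which reducing $a^2+2b^2+c^2+2d^2=2^m$ modulo $8$ excludes $m\ge3$ (it would force $b^2+d^2\equiv3\pmod4$). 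The remaining values $m\in\{0,1,2\}$ are dispatched by a short finite check, giving $(1,0)$ and $(0,1)$ for $m=0$, $(\tfrac1{\sqrt2},\tfrac1{\sqrt2})$ for $m=1$, and no solution for $m=2$. The part I expect to be the real obstacle is precisely this arithmetic lemma, and within it the bound $m\le1$ on the denominator exponent; everything else is structural bookkeeping.
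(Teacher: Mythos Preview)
Your proof is correct. For the Clifford$+$T bullet you follow the same strategy as the paper---invoke the Giles--Selinger ring characterisation to place $\sqrt p,\sqrt{1-p}\in\mathbb Z[\tfrac1{\sqrt2}]$, then do elementary number theory---but your arithmetic is organised more cleanly: by clearing a common denominator $(\sqrt2)^m$ rather than separate powers of $2$, you avoid the paper's preliminary step of showing the two denominator exponents coincide, and your single mod-$8$ reduction bounds $m$ directly, leaving only the finite check $m\in\{0,1,2\}$; the paper instead threads a longer case analysis on the parities of $a,b,c,d$ and on $k\in\{0,1,\ge2\}$. For the Clifford bullet you supply an independent stabiliser-code argument (the image of any Clifford isometry is a stabiliser code, and the only $2$-dimensional codes containing $\ket{00}$ are the $(+1)$-eigenspaces of $Z_1,Z_2,Z_1Z_2$), whereas the paper's appendix proves only the Clifford$+$T statement and leaves the exclusion of $p=\tfrac12$ from the Clifford case implicit; your route makes that exclusion explicit and self-contained.
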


When $p\in\{0,1\}$, $F_p$ specialises to $I\otimes\ket0$ and $\ket0\otimes I$ respectively. When $p=1/2$, the $R_Y$ gate can be replaced by the CH gate, which indeed is in Clifford+T. The result above shows that in any other case, $F_p$ requires non-Clifford+T gates to be built exactly. The case $p=1/2$ appears frequently when dealing with uniform $W$ states. More precisely:

\begin{corollary}
	\label{thm:Wn}
	For any $n \in \mathbb{N}$, an $n$-qubit uniform W-state $\ket{W_n}$ can be prepared by a circuit with $\mathcal{O}(n)$ circuit size, $\mathcal{O}(\log_2(n))$ circuit depth, $\mathcal{O}(n)$ T gates, and $\mathcal{O}(\log_2(n))$ non-Clifford+T gates.
\end{corollary}

More generally, an occurrence of $F_p$ with $p=1/2$ occurs when considering a tree $(t_1,t_2)$ such that $w(t_1)=w(t_2)$. Let's call such a tree, a \emph{perfectly balanced} tree. It is then possible to choose the tree to build $\ket{W(\vec\alpha)}$ in a way that maximises the number of perfectly balanced subtrees, at the expense of depth (see for example \Cref{fig:example-different-trees} to build $\ket{W(\vec\alpha)}$ where $\vec\alpha = (\sqrt3,\sqrt2,\sqrt2,1,1,1,1)$.\\
\begin{figure}[!ht]
	$a.~~
\begin{tikzpicture}
	\begin{pgfonlayer}{nodelayer}
		\node [style=dot] (0) at (-1.75, -0.5) {};
		\node [style=dot] (1) at (-1.25, -0.5) {};
		\node [style=dot] (2) at (-0.75, -0.5) {};
		\node [style=dot] (3) at (-0.25, -0.5) {};
		\node [style=dot] (4) at (0.25, -0.5) {};
		\node [style=dot] (5) at (0.75, -0.5) {};
		\node [style=dot] (6) at (1.5, 0) {};
		\node [style=dot] (7) at (-1.5, 0) {};
		\node [style=dot] (8) at (-0.5, 0) {};
		\node [style=dot] (9) at (0.5, 0) {};
		\node [style=dot] (10) at (-1, 0.5) {};
		\node [style=dot] (11) at (1, 0.5) {};
		\node [style=dot] (12) at (0, 1) {};
		\node [style=none] (13) at (-1.75, -0.75) {$3$};
		\node [style=none] (14) at (-1.25, -0.75) {$2$};
		\node [style=none] (15) at (-0.75, -0.75) {$2$};
		\node [style=none] (16) at (-0.25, -0.75) {$1$};
		\node [style=none] (17) at (0.25, -0.75) {$1$};
		\node [style=none] (18) at (0.75, -0.75) {$1$};
		\node [style=none] (19) at (1.5, -0.25) {$1$};
		\node [style=highlight] (20) at (-1.5, 0) {};
		\node [style=highlight] (21) at (-0.5, 0) {};
		\node [style=highlight] (22) at (1, 0.5) {};
		\node [style=highlight] (23) at (-1, 0.5) {};
		\node [style=highlight] (24) at (0, 1) {};
		\node [style=none, text=gray] (25) at (-1.75, 0.25) {$\frac25$};
		\node [style=none, text=gray] (26) at (-0.25, 0.25) {$\frac13$};
		\node [style=none, text=gray] (27) at (0.25, 0.25) {$\frac12$};
		\node [style=none, text=gray] (28) at (1.25, 0.75) {$\frac13$};
		\node [style=none, text=gray] (29) at (-1.25, 0.75) {$\frac38$};
		\node [style=none, text=gray] (30) at (0.25, 1.25) {$\frac3{11}$};
	\end{pgfonlayer}
	\begin{pgfonlayer}{edgelayer}
		\draw (12) to (10);
		\draw (10) to (7);
		\draw (7) to (0);
		\draw (7) to (1);
		\draw (2) to (8);
		\draw (8) to (10);
		\draw (8) to (3);
		\draw (4) to (9);
		\draw (9) to (5);
		\draw (9) to (11);
		\draw (11) to (6);
		\draw (11) to (12);
	\end{pgfonlayer}
\end{tikzpicture}
\hfill b.~~
\begin{tikzpicture}
	\begin{pgfonlayer}{nodelayer}
		\node [style=dot] (0) at (-2.25, -0.5) {};
		\node [style=dot] (1) at (-1.75, -0.5) {};
		\node [style=dot] (2) at (-1.25, -0.5) {};
		\node [style=dot] (3) at (-0.75, -0.5) {};
		\node [style=dot] (4) at (-0.25, -0.5) {};
		\node [style=dot] (5) at (0.25, -0.5) {};
		\node [style=dot] (6) at (1, 0) {};
		\node [style=dot] (7) at (-2, 0) {};
		\node [style=dot] (8) at (-1, 0) {};
		\node [style=dot] (9) at (0, 0) {};
		\node [style=dot] (10) at (-1.5, 0.5) {};
		\node [style=dot] (11) at (0.5, 0.5) {};
		\node [style=dot] (12) at (-0.5, 1) {};
		\node [style=none] (13) at (1, -0.25) {$3$};
		\node [style=none] (14) at (0.25, -0.75) {$2$};
		\node [style=none] (15) at (-0.25, -0.75) {$2$};
		\node [style=none] (16) at (-1.25, -0.75) {$1$};
		\node [style=none] (17) at (-0.75, -0.75) {$1$};
		\node [style=none] (18) at (-1.75, -0.75) {$1$};
		\node [style=none] (19) at (-2.25, -0.75) {$1$};
		\node [style=highlight] (20) at (0.5, 0.5) {};
		\node [style=highlight] (21) at (-0.5, 1) {};
		\node [style=none, text=gray] (22) at (-2.25, 0.25) {$\frac12$};
		\node [style=none, text=gray] (23) at (-0.75, 0.25) {$\frac12$};
		\node [style=none, text=gray] (24) at (-0.25, 0.25) {$\frac12$};
		\node [style=none, text=gray] (25) at (-1.75, 0.75) {$\frac12$};
		\node [style=none, text=gray] (26) at (0.75, 0.75) {$\frac37$};
		\node [style=none, text=gray] (27) at (-0.25, 1.25) {$\frac7{11}$};
	\end{pgfonlayer}
	\begin{pgfonlayer}{edgelayer}
		\draw (12) to (10);
		\draw (10) to (7);
		\draw (7) to (0);
		\draw (7) to (1);
		\draw (2) to (8);
		\draw (8) to (10);
		\draw (8) to (3);
		\draw (4) to (9);
		\draw (9) to (5);
		\draw (9) to (11);
		\draw (11) to (6);
		\draw (11) to (12);
	\end{pgfonlayer}
\end{tikzpicture}
\hfill c.~~
\begin{tikzpicture}
	\begin{pgfonlayer}{nodelayer}
		\node [style=dot] (0) at (-1.25, -0.75) {};
		\node [style=dot] (1) at (-0.75, -0.75) {};
		\node [style=dot] (2) at (-0.25, -0.75) {};
		\node [style=dot] (3) at (0.25, -0.75) {};
		\node [style=dot] (4) at (0.75, -0.25) {};
		\node [style=dot] (5) at (1.25, -0.25) {};
		\node [style=dot] (7) at (-1, -0.25) {};
		\node [style=dot] (8) at (0, -0.25) {};
		\node [style=dot] (9) at (1, 0.25) {};
		\node [style=dot] (10) at (-0.5, 0.25) {};
		\node [style=dot] (12) at (0.25, 0.75) {};
		\node [style=none] (14) at (1.25, -0.5) {$2$};
		\node [style=none] (15) at (0.75, -0.5) {$2$};
		\node [style=none] (16) at (-0.25, -1) {$1$};
		\node [style=none] (17) at (0.25, -1) {$1$};
		\node [style=none] (18) at (-0.75, -1) {$1$};
		\node [style=none] (19) at (-1.25, -1) {$1$};
		\node [style=dot] (20) at (1.75, 0.75) {};
		\node [style=none] (21) at (1.75, 0.5) {$3$};
		\node [style=dot] (22) at (1, 1) {};
		\node [style=highlight] (23) at (1, 1) {};
		\node [style=none, text=gray] (24) at (-1.25, 0) {$\frac12$};
		\node [style=none, text=gray] (25) at (0.25, 0) {$\frac12$};
		\node [style=none, text=gray] (26) at (1.25, 0.5) {$\frac12$};
		\node [style=none, text=gray] (27) at (-0.75, 0.5) {$\frac12$};
		\node [style=none, text=gray] (28) at (0, 1) {$\frac12$};
		\node [style=none, text=gray] (29) at (1.25, 1.25) {$\frac3{11}$};
	\end{pgfonlayer}
	\begin{pgfonlayer}{edgelayer}
		\draw (12) to (10);
		\draw (10) to (7);
		\draw (7) to (0);
		\draw (7) to (1);
		\draw (2) to (8);
		\draw (8) to (10);
		\draw (8) to (3);
		\draw (4) to (9);
		\draw (9) to (5);
		\draw (12) to (9);
		\draw (22) to (12);
		\draw (22) to (20);
	\end{pgfonlayer}
\end{tikzpicture}
$
	\caption{Weights $(3,2,2,1,1,1,1)$ arranged: a. in descending order in a complete tree, b. in ascending order in a complete tree, c. in an arbitrary tree.}
	\label{fig:example-different-trees}
\end{figure}
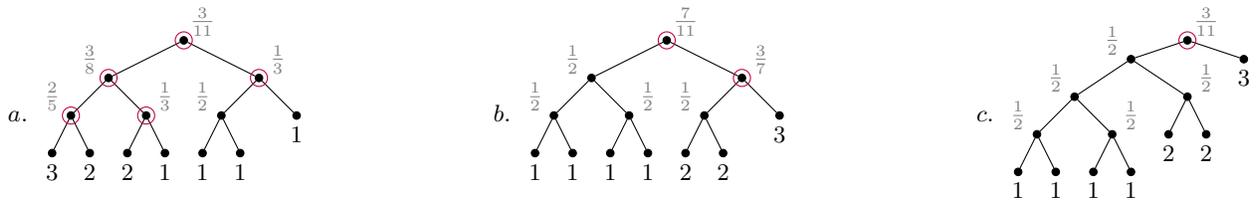
This optimisation problem is perfectly classical. We however conjecture that finding an optimal solution is \textbf{NP-hard}, due to its proximity with hard optimisation problems such as the partition problem. Its study is left open, as it goes beyond the current paper, which is interested in general bounds (and we can easily be convinced that in the worst cases, all $F_p$ will have to be outside the Clifford+T gate set).

In these worst cases, we can always use \cite{Dawson2006SolovayKitaev,Selinger2015Efficient} to compile down to Clifford+T up to some $\epsilon$ error:

\begin{theorem}
	\label{thm:Wn-weight-ClifT}
	For any $n \in \mathbb{N}$ and $\vec\alpha \in \mathbb{C}^n$, an $n$-qubit weighted W-state $\ket{W(\vec\alpha)}$ can be approximated by an ancilla-free Clifford+T circuit up to error $\epsilon$ with $\mathcal{O}(n\log_2(n/\epsilon))$ size, $\mathcal{O}(\log_2(n)\log_2(n/\epsilon))$ depth, and $\mathcal{O}(n\log_2(n/\epsilon))$ T gates.
\end{theorem}

\subsection{Full-state-based construction}

An alternative way to build an arbitrary W-state of size $n$, is to start from a full state preparation on $\lceil \log_2(n)\rceil$ qubits, and map each of the basis states $\ket{i}$ to $\ket{e_i}$ using the results from \Cref{sec:permut}. In the fault-tolerant regime, this allows us in particular to use the following result:

\begin{theorem}[\cite{Gosset2025Preparation}]
	\label{thm:Gosset}
	Let $\ket\psi$ be an $n$-qubit state, and $\epsilon>0$. The state $\ket\psi$ can be prepared up to error $\epsilon$ by a circuit with $\mathcal O\left(\sqrt{2^n\log_2(1/\epsilon)}+\log_2(1/\epsilon)\right)$ T-gates and ancillae, and $\mathcal O\left(2^n\log_2(1/\epsilon)\right)$ depth. The T-count is optimal.
\end{theorem}

Together with previous results, we get:

\begin{corollary}
	\label{cor:weighted-Wn}
	Let $\vec \alpha \in \mathbb C^n$ and $\epsilon >0$, then $\ket{W(\vec\alpha)}$ can be prepared up to error $\epsilon$ by either one of two Clifford+T circuit with the following metrics:\\
	\begin{tabular}{@{}lll@{}}
		\toprule
		\textbf{T-count} & \textbf{Depth} & \textbf{Ancillae} \\
		\midrule
		$\mathcal O\left(\sqrt{n\log_2(1/\epsilon)}+\log_2(1/\epsilon)\right)$ &
		$\mathcal O\left(n\log_2(1/\epsilon)\right)$ &
		$\mathcal O\left(n^2+\log_2(1/\epsilon)\right)$\\
		$\mathcal O\left(n+\log_2(1/\epsilon)\right)$ &
		$\mathcal O\left(n\log_2(1/\epsilon)\right)$ & 
		$\mathcal O\left( \sqrt{n\log_2(1/\epsilon)}+\log_2(1/\epsilon) \right)$\\
		\bottomrule
	\end{tabular}
\end{corollary}

Interestingly, we can get even better resource usage when building a uniform W-state:
\begin{corollary}
	\label{cor:Wn-cliffT}
	For any $n \in \mathbb{N}$ and $\vec\alpha \in \{e^{i\frac{k\pi}4}\mid 0\leq k<8\}^n$, $\ket{W(\vec\alpha)}$ can be prepared exactly by a circuit with $\mathcal O(\sqrt{n})$ T-gates, $\mathcal O(n^2)$ dirty ancillae, $\mathcal O(\log_2(n)^2)$ depth and success probability $> 1/2$.
\end{corollary}

\section{Sparse State Preparation}
\label{sec:sparse-state}

Our proposal for sparse state preparation is now a mere combination of the previous constructions, first, by creating a weighted W-state with appropriate weights, and then using the circuits for mapping the $\ket{e_i}$ states to the target states. First, assuming access to arbitrary-angled phase gates:

\begin{theorem}
	\label{thm:sparse}
	For any $s,n \in \mathbb{N}$, an $s$-sparse, $n$-qubit state $\ket{\psi}$ can be exactly prepared using a circuit with $\mathcal{O}(s(s+n))$ circuit size, $\mathcal{O}(s+n)$ circuit depth, $\mathcal{O}(s)$ non-Clifford gates, and $\max(0,s-n)$ ancillary qubits.
\end{theorem}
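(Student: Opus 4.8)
The plan is to prove \Cref{thm:sparse} by assembling the component results established in this section, via a case split on whether $s \le n$ or $s > n$.

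First I would treat the case $s \le n$, so that $m = n$ and the claimed ancilla count $m-n = 0$. Here the construction is exactly the one described above: (i) optionally apply a single $X$ gate in the preprocessing step so that $M_{\ket\psi}$ has no zero column; (ii) prepare $\ket\phi = \ket{W(\vec\alpha)}\otimes\ket0^{\otimes n-s}$ using the weighted W-state circuit of \Cref{thm:Wn-weight} (with the coefficient list permuted by $\tilde Q^\dagger$), which costs $\mathcal O(s)$ gates, $\mathcal O(s)$ non-Clifford gates and $\mathcal O(\log s)$ depth on $s$ of the $n$ qubits; (iii) apply, in reverse, the sequence of $CX$ and $CCX$ gates produced by the LU-based modified Gauss--Jordan elimination, which maps $\ket\phi$ to a row/column permutation of $\ket\psi$; and (iv) realise the qubit permutation $P^\dagger$ either as $\mathcal O(1)$ extra layers of $CX$ or by pushing it up into the W-state construction. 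By the analysis of the full-rank and rank-deficient subcases, step (iii) contributes $\mathcal O(sn)$ gates, $\mathcal O(s)$ non-Clifford gates (at most $s-\rank(M_{\ket\psi})$), and $\mathcal O(s+n)$ depth. Summing the two contributions and using $s\le n$ yields size $\mathcal O(sn) = \mathcal O(sm)$, non-Clifford count $\mathcal O(s)$, and depth $\mathcal O(s+n) = \mathcal O(n) = \mathcal O(m)$, with no ancilla, as claimed.

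Then I would handle $s > n$, so $m = s$, by the reduction already sketched: pad to $\ket{\psi'} := \ket\psi \otimes \ket0^{\otimes s-n}$, which is an $s$-qubit, $s$-sparse state; apply the previous case with $n \leftarrow s$; and finally discard (or measure) the $s-n$ rightmost qubits, which are guaranteed by construction to be in state $\ket0$, so no disentangling is needed. Substituting $n \leftarrow s$ into the bounds of the first case gives size $\mathcal O(s^2) = \mathcal O(sm)$, non-Clifford count $\mathcal O(s)$, depth $\mathcal O(s) = \mathcal O(m)$, and $s-n = m-n$ ancillary qubits. Combining the two cases gives the uniform statement with $m = \max\{s,n\}$.

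The argument is essentially bookkeeping, so there is no deep obstacle; the points requiring care are: (a) checking that the preprocessing $X$ gate, the permutations $P$ and $\tilde Q$, and their inverses are all absorbed without affecting the asymptotics — the $P^\dagger$ qubit permutation in particular must be dealt with by one of the two stated devices; (b) verifying in the rank-deficient branch that the running index $j$ behaves as stated, so that each anti-diagonal is cleared exactly once and the aggregate $CX$ depth remains $\mathcal O(s+n)$; and (c) confirming correctness, i.e. that the output state is exactly $\ket\psi$ (including normalisation), which follows from \Cref{prop:semantics-W-prep-weight} together with the fact that $X$, $CX$ and $CCX$ enact the stated $\mathbb F_2$ row operations on the column-encoding matrices $M_{\ket\psi}$, $M_{\ket\phi}$ and hence realise the intended bijection between the supports of $\ket\phi$ and $\ket\psi$.
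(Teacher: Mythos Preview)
Your proposal is correct and follows essentially the same approach as the paper: the theorem is a summary of the constructions in this section, obtained by the case split $s\le n$ versus $s>n$, with the first case handled by the weighted $W$-state preparation of \Cref{thm:Wn-weight} followed by the LU-based reversible circuit (full-rank and rank-deficient subcases), and the second case reduced to the first by padding with $s-n$ ancillas. The care points you list (absorption of permutations, monotonicity of the index $j$ so that each anti-diagonal is removed once, and correctness of the $\mathbb F_2$ row-operation interpretation) are exactly the ones the paper highlights.
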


\begin{proof}
	By combining \Cref{thm:Wn-weight} and \Cref{thm:permut}.
\end{proof}

When compiling down to the Clifford+T gate set, one may trade depth and ancilla-count for T gates:

\begin{theorem}
	\label{thm:sparse-state-prep-ClifT}
	For any $s,n \in \mathbb{N}$, an $s$-sparse, $n$-qubit state $\ket{\psi}$ can be prepared up to error $\epsilon>0$ using either one of 3 circuits, respectively with the following metrics:\\
	\begin{tabular}{@{}lll@{}}
		\toprule
		\textbf{T-count} & \textbf{Depth} & \textbf{Ancillae} \\
		\midrule
		$\mathcal{O}(s\log_2(s/\epsilon))$ & $\mathcal{O}(s+n+\log_2(s)\log_2(1/\epsilon))$ & $\max(0,s-n)$ \\
		$\mathcal O\left(s+\log_2(1/\epsilon)\right)$ & $\mathcal O\left(s\log_2(1/\epsilon)+n\right)$ & $\mathcal O\left( \sqrt{s\log_2(1/\epsilon)}+\log_2(1/\epsilon) \right)$ \\
		$\mathcal O\left(\sqrt{s\log_2(1/\epsilon)}+\log_2(1/\epsilon)\right)$ & $\mathcal O\left(s\log_2(1/\epsilon)+n\right)$ & $\mathcal O\left(s^2+\log_2(1/\epsilon)\right)$\\
		\bottomrule
	\end{tabular}
\end{theorem}

\begin{proof}[Proof of \Cref{thm:sparse-state-prep-ClifT}]
	The three constructions first build a weighted W-state, then use \Cref{thm:permut} to reach the target state. The first construction uses \Cref{thm:Wn-weight-ClifT}, while the other two use the two results from \Cref{cor:weighted-Wn}. 
\end{proof}

Notice that the last construction reaches the same T-count as the full-state preparation from \cite{Gosset2025Preparation}, but where sparsity replaces the overall number of entries. It is hence optimal when $s\leftarrow 2^n$, i.e.~when the state is actually full. We conjecture that the above construction is optimal in T-count whatever the sparsity $s$.

Finally, the synthesis result for uniform W states (\Cref{cor:Wn-cliffT}) can be generalised as follows, by combining it with \Cref{thm:permut}:
\begin{corollary}
	For $n,s\in \mathbb N$, any $s$-sparse $n$-qubit state with non-zero values in $\{e^{i\frac{k\pi}4}\mid 0\leq k<8\}$ can be prepared exactly by a circuit with $\mathcal O(\sqrt{s})$ T-gates, $\mathcal O(s^2)$ dirty ancillae, $\mathcal O(s+n)$ depth and success \mbox{probability $>1/2$.}
\end{corollary}

\section{Conclusion}
\label{sec:discussion}

We presented in this paper several resource-efficient algorithms to prepare an arbitrary sparse state, either in the exact setting (with access to arbitrary-angled rotations), or in the fault-tolerant regime by compiling down to Clifford+T. Of the many metrics considered, our particular focus was the number of non-Clifford gates and the circuit depth. In the ideal setting, we showed that the number of non-Clifford gates is linear in the sparsity $s$, i.e., $\mathcal O(s)$, and that the circuit depth is linear in both sparsity $s$ and the number of qubits $n$, i.e., $\mathcal{O}(s+n)$. In the Clifford+T approximated setting, we provided several constructions whose depth and T-count vary depending on the number of ancillae used. The most T-wise optimised construction yields $\mathcal O\left(\sqrt{s\log_2(1/\epsilon)}+\log_2(1/\epsilon)\right)$ T-count, which is on par with the optimality result from~\cite{Gosset2025Preparation} for full-state preparation, and which we hence conjecture to be optimal for sparse states.

Our sparse state construction is divided into two separate tasks, i.e., (1) the generation of sparse state coefficients via the weighted W-state, and (2) the rearrangement of classical basis states of $W$-state via classically reversible circuits. Through this separation of tasks, we managed to study both tasks independently, and came up with improvements and optimizations of resources that are specific to each task. The tree-based approach of the $W$-state preparation is well suited for many architectures, although the choice of the tree can be optimised on a case-by-case basis e.g.~to reduce the non-Clifford+T gate count. This particular question can be reduced to a \emph{decision problem}, which we conjecture to be \textbf{NP-hard} due to its proximity with combinatorial optimization problems, especially the \emph{partition problem}, which itself is \textbf{NP-complete}. On the other hand, the classically reversible part has been optimized here for non-Clifford gate count and depth, but assuming an all-to-all dependency of the architecture. It may be possible to adapt the algorithm to take specific architectures into account, so that they are as resource-efficient as those of the all-to-all dependency architecture.

It is worth investigating whether the synthesis of $W$-states can be generalized to that of another special state called Dicke-states, which is a further generalization of $W$-states. The utility of Dicke-states could allow for a more compact synthesis, with fewer ancillary qubits. One potential resource-efficient approach to answer this question is by trying to adapt the classically reversible part to turn the basis states of the Dicke-state to the target state.

Finally, although the proposed sparse state synthesis approach in this paper is resource-efficient in the number of non-Clifford gates and circuit depth, it can require a large number of ancillary qubits compared to the state-of-the-art. A potential extension to this work could be to find an alternative approach, with a lower number of ancillary qubits, while preserving the same complexities for the number of non-Clifford gates and circuit depth.

\section*{Acknowledgments}

The authors acknowledge support from the Pack Quantique project AQEDP.

Renaud Vilmart acknowledges support from the PEPR integrated project EPiQ ANR-22-PETQ-0007 part of Plan France 2030, the ANR projects TaQC ANR-22-CE47-0012 and HQI ANR-22-PNCQ-0002, as well as the European project HPCQS. 

Sunheang Ty and Chetra Mang acknowledge support by the French government's aid in the framework of PIA (Programme d'Investissement d'Avenir) for Institut de Recherche Technologique SystemX, and from the PEPR integrated project EPiQ ANR-22PETQ-0007 part of Plan France 2030.


\appendix

\section{Details on the algorithm applied to \Cref{ex:permut}}
\label{sec:details-example-permut}

First, for $L$:
\begin{align*}
	\begin{pmatrix}
		1& & & & & & \\
		&1& & & & & \\
		& &1& & & & \\
		& &1&1& & & \\
		&1& & &1& & \\
		1& &1& & &1& \\
		& & & & & &1
	\end{pmatrix}
	\overset{\substack{CX_{0,5}\\CX_{1,4}\\CX_{2,3}}}\longrightarrow
	\begin{pmatrix}
		1& & & & & & \\
		&1& & & & & \\
		& &1& & & & \\
		& & &1& & & \\
		& & & &1& & \\
		& &1& & &1& \\
		& & & & & &1
	\end{pmatrix}
	\overset{\substack{CX_{2,5}}}\longrightarrow
	\begin{pmatrix}
		1& & & & & & \\
		&1& & & & & \\
		& &1& & & & \\
		& & &1& & & \\
		& & & &1& & \\
		& & & & &1& \\
		& & & & & &1
	\end{pmatrix}
\end{align*}
Then, for $U$:
\begin{align*}
	\begin{pmatrix}
		1& & &1&1& &1\\
		&1& &1& &1&1\\
		& &1& &1&1&1\\
		& & & & & & \\
		& & & & & & \\
		& & & & & & \\
		& & & & & & 
	\end{pmatrix}
	\overset{\substack{CCX_{0,1,3}}}\longrightarrow
	\begin{pmatrix}
		1& & &1&1& &1\\
		&1& &1& &1&1\\
		& &1& &1&1&1\\
		& & &1& & &1\\
		& & & & & & \\
		& & & & & & \\
		& & & & & & 
	\end{pmatrix}
	\overset{\substack{CX_{3,0}}}\longrightarrow
	\begin{pmatrix}
		1& & & &1& & \\
		&1& &1& &1&1\\
		& &1& &1&1&1\\
		& & &1& & &1\\
		& & & & & & \\
		& & & & & & \\
		& & & & & & 
	\end{pmatrix}
	\overset{\substack{CCX_{0,2,4}}}\longrightarrow
	\begin{pmatrix}
		1& & & &1& & \\
		&1& &1& &1&1\\
		& &1& &1&1&1\\
		& & &1& & &1\\
		& & & &1& & \\
		& & & & & & \\
		& & & & & & 
	\end{pmatrix}\\
	\overset{\substack{CX_{3,1}\\CX_{4,0}}}\longrightarrow
	\begin{pmatrix}
		1& & & & & & \\
		&1& & & &1& \\
		& &1& &1&1&1\\
		& & &1& & &1\\
		& & & &1& & \\
		& & & & & & \\
		& & & & & & 
	\end{pmatrix}
	\overset{\substack{CCX_{1,2,5}}}\longrightarrow
	\begin{pmatrix}
		1& & & & & & \\
		&1& & & &1& \\
		& &1& &1&1&1\\
		& & &1& & &1\\
		& & & &1& & \\
		& & & & &1& \\
		& & & & & & 
	\end{pmatrix}
	\overset{\substack{CX_{4,2}\\CX_{5,1}}}\longrightarrow
	\begin{pmatrix}
		1& & & & & & \\
		&1& & & & & \\
		& &1& & &1&1\\
		& & &1& & &1\\
		& & & &1& & \\
		& & & & &1& \\
		& & & & & & 
	\end{pmatrix}
	\overset{\substack{CX_{5,2}}}\longrightarrow
	\begin{pmatrix}
		1& & & & & & \\
		&1& & & & & \\
		& &1& & & &1\\
		& & &1& & &1\\
		& & & &1& & \\
		& & & & &1& \\
		& & & & & & 
	\end{pmatrix}\\
	\overset{\substack{CCX_{2,3,6}}}\longrightarrow
	\begin{pmatrix}
		1& & & & & & \\
		&1& & & & & \\
		& &1& & & &1\\
		& & &1& & &1\\
		& & & &1& & \\
		& & & & &1& \\
		& & & & & &1
	\end{pmatrix}
	\overset{\substack{CX_{6,2}}}\longrightarrow
	\begin{pmatrix}
		1& & & & & & \\
		&1& & & & & \\
		& &1& & & & \\
		& & &1& & &1\\
		& & & &1& & \\
		& & & & &1& \\
		& & & & & &1
	\end{pmatrix}
	\overset{\substack{CX_{6,3}}}\longrightarrow
	\begin{pmatrix}
		1& & & & & & \\
		&1& & & & & \\
		& &1& & & & \\
		& & &1& & &\\
		& & & &1& & \\
		& & & & &1& \\
		& & & & & &1
	\end{pmatrix}
\end{align*}

\section{Proofs}

\label{sec:appendix:proofs}

\begin{proof}[Proof of \Cref{cor:i-to-ei}]
	We build the circuit as follows:
	\[
\begin{tikzpicture}
	\begin{pgfonlayer}{nodelayer}
		\node [style=none] (4) at (-1.125, 0.25) {};
		\node [style=none] (5) at (-0.625, 0.25) {};
		\node [style=none] (6) at (-1.375, -0.5) {$\ket0$};
		\node [style=none] (7) at (-0.625, -0.75) {};
		\node [style=none] (8) at (-0.625, 0.5) {};
		\node [style=none] (9) at (0.125, -0.75) {};
		\node [style=none] (10) at (0.125, 0.5) {};
		\node [style=none] (11) at (-0.85, 0.325) {};
		\node [style=none] (12) at (-0.9, 0.175) {};
		\node [style=none] (13) at (-1.125, -0.5) {};
		\node [style=none] (14) at (-0.625, -0.5) {};
		\node [style=none] (15) at (-0.85, -0.425) {};
		\node [style=none] (16) at (-0.9, -0.575) {};
		\node [style=none] (17) at (0.125, 0.25) {};
		\node [style=none] (18) at (0.625, 0.25) {};
		\node [style=none] (19) at (0.4, 0.325) {};
		\node [style=none] (20) at (0.35, 0.175) {};
		\node [style=none] (21) at (0.125, -0.5) {};
		\node [style=none] (22) at (0.625, -0.5) {};
		\node [style=none] (23) at (0.4, -0.425) {};
		\node [style=none] (24) at (0.35, -0.575) {};
		\node [style=none] (25) at (1.375, -0.75) {};
		\node [style=none] (26) at (1.375, 0.5) {};
		\node [style=none] (27) at (2.125, -0.75) {};
		\node [style=none] (28) at (2.125, 0.5) {};
		\node [style=none] (29) at (2.125, -0.5) {};
		\node [style=none] (30) at (2.625, -0.5) {};
		\node [style=none] (31) at (2.35, -0.575) {};
		\node [style=none] (32) at (2.4, -0.425) {};
		\node [style=none] (33) at (2.125, 0.25) {};
		\node [style=none] (34) at (2.875, 0.25) {};
		\node [style=none] (35) at (2.475, 0.175) {};
		\node [style=none] (36) at (2.525, 0.325) {};
		\node [style=none] (37) at (2.875, -0.5) {$\ket0$};
		\node [style=none] (38) at (-0.25, -0.125) {$U_f$};
		\node [style=none] (39) at (1.75, -0.125) {$U_g$};
		\node [style=none] (40) at (1.125, -0.5) {};
		\node [style=none] (41) at (1.375, -0.5) {};
		\node [style=none] (42) at (1.125, 0.25) {};
		\node [style=none] (43) at (1.375, 0.25) {};
	\end{pgfonlayer}
	\begin{pgfonlayer}{edgelayer}
		\draw (5.center) to (4.center);
		\draw (8.center) to (7.center);
		\draw (10.center) to (9.center);
		\draw (7.center) to (9.center);
		\draw (10.center) to (8.center);
		\draw (12.center) to (11.center);
		\draw (14.center) to (13.center);
		\draw (16.center) to (15.center);
		\draw (18.center) to (17.center);
		\draw (20.center) to (19.center);
		\draw (22.center) to (21.center);
		\draw (24.center) to (23.center);
		\draw (26.center) to (25.center);
		\draw (28.center) to (27.center);
		\draw (25.center) to (27.center);
		\draw (28.center) to (26.center);
		\draw (30.center) to (29.center);
		\draw (32.center) to (31.center);
		\draw (34.center) to (33.center);
		\draw (36.center) to (35.center);
		\draw (41.center) to (40.center);
		\draw (43.center) to (42.center);
		\draw (42.center) to (22.center);
		\draw (18.center) to (40.center);
	\end{pgfonlayer}
\end{tikzpicture}
\]
	where the first output register is of size $s$ and the second of size $\lceil\log_2(s)\rceil$, and where $U_f:\ket{i,y}\mapsto \ket{i,y\oplus e_i}$ and $U_g:\ket{e_i,y}\mapsto \ket{e_i,y\oplus i}$. One can easily check that the result of the circuit indeed maps each $\ket{i}$ to $\ket{e_i}$.
	
	To build $U_g$, it suffices to use the previous algorithm to turn the following matrix to a unitary one:
	\[\left(
\begin{tikzpicture}
	\begin{pgfonlayer}{nodelayer}
		\node [style=none] (0) at (-2, -0.375) {$0$};
		\node [style=none] (1) at (-2, -0.675) {$\vdots$};
		\node [style=none] (2) at (-2, -1.175) {$0$};
		\node [style=none] (3) at (-1.75, -0.375) {$0$};
		\node [style=none] (4) at (-1.75, -0.675) {$\vdots$};
		\node [style=none] (5) at (-1.75, -1.175) {$0$};
		\node [style=none] (6) at (-1.75, -1.625) {$1$};
		\node [style=none] (7) at (-0.75, -1.125) {...};
		\node [style=none] (15) at (-2.125, -0.125) {};
		\node [style=none] (16) at (-0.375, -0.125) {};
		\node [style=none] (17) at (-0.5, 0.125) {$1$};
		\node [style=none] (18) at (-2, 1.625) {$1$};
		\node [style=none] (19) at (-0.75, 0.375) {};
		\node [style=none] (20) at (-1.75, 1.375) {};
		\node [style=none] (21) at (-1.5, -0.375) {$0$};
		\node [style=none] (22) at (-1.5, -0.675) {$\vdots$};
		\node [style=none] (23) at (-1.5, -1.625) {$0$};
		\node [style=none] (24) at (-1.5, -1.175) {$1$};
		\node [style=none] (25) at (-2, -1.625) {$0$};
	\end{pgfonlayer}
	\begin{pgfonlayer}{edgelayer}
		\draw [style=dashed] (16.center) to (15.center);
		\draw (20.center) to (19.center);
	\end{pgfonlayer}
\end{tikzpicture}
\right)\]
	which yields a $CX$ circuit of depth $\mathcal O(s)$.
	
	To build $U_f$, one can use \Cref{thm:Low} with $n:=\lceil \log_2(s)\rceil$ and $b=s$. Their first construction hence uses $\mathcal O(\sqrt{s})$ T-gates, $\mathcal O(s^2)$ dirty ancillae, and $\mathcal O(\log_2(s)^2)$ depth.
\end{proof}

For the next proof, it will be convenient to work with the unnormalised $W$-state: $\ket{\tilde{W}_n}:=\sqrt n \ket{W_n}$. This version enjoys in particular a nice recursive definition:
\begin{align*}
	\ket{\tilde{W}_1} := \ket1 \qquad\qquad
	\ket{\tilde{W}_n} := \ket{0^k}\ket{\tilde{W}_{n-k}} + \ket{\tilde{W}_k}\ket{0^{n-k}}
\end{align*}
for any $1\leq k < n$.
\begin{align*}
	\ket{W(\rho e^{i\phi})} := e^{i\phi}\ket1 \qquad\qquad
	\ket{W(\vec\alpha_1\cdot\vec\alpha_2)} := \ket{0^k}\ket{\tilde{W}_{n-k}} + \ket{\tilde{W}_k}\ket{0^{n-k}}
\end{align*}

\begin{proof}[Proof of \Cref{prop:semantics-W-prep-weight}]	
	First, let's show that $C(T)$ maps $\ket0$ to $\ket{0^{\otimes n}}$. 
	The base case can easily be checked and comes from the fact that $P(\phi)\ket0 = \ket0$. Then, assuming the property is proven for trees $T_1$ and $T_2$, and using the fact that $F_p\ket0=\ket{00}$:
	\begin{align*}
		
\begin{tikzpicture}
	\begin{pgfonlayer}{nodelayer}
		\node [style=none] (0) at (-3.5, -0.125) {};
		\node [style=none] (1) at (-2, -0.125) {};
		\node [style=none] (2) at (-3.5, -0.875) {};
		\node [style=none] (3) at (-2, -0.875) {};
		\node [style=none] (4) at (-2.75, -0.5) {$C(T_2)$};
		\node [style=none] (5) at (-3.5, -0.5) {};
		\node [style=none] (6) at (-3.75, -0.5) {};
		\node [style=none] (7) at (-1.75, -0.75) {};
		\node [style=none] (8) at (-2, -0.75) {};
		\node [style=none] (9) at (-1.75, -0.25) {};
		\node [style=none] (10) at (-2, -0.25) {};
		\node [style=none] (11) at (-3.5, 0.875) {};
		\node [style=none] (12) at (-2, 0.875) {};
		\node [style=none] (13) at (-3.5, 0.125) {};
		\node [style=none] (14) at (-2, 0.125) {};
		\node [style=none] (15) at (-2.75, 0.5) {$C(T_1)$};
		\node [style=none] (16) at (-3.5, 0.5) {};
		\node [style=none] (17) at (-3.75, 0.5) {};
		\node [style=none] (18) at (-1.75, 0.25) {};
		\node [style=none] (19) at (-2, 0.25) {};
		\node [style=none] (20) at (-1.75, 0.75) {};
		\node [style=none] (21) at (-2, 0.75) {};
		\node [style=none] (22) at (-1.85, 0.6) {$\vdots$};
		\node [style=none] (23) at (-1.85, -0.4) {$\vdots$};
		\node [style=none] (24) at (-5, 0.625) {};
		\node [style=none] (25) at (-3.75, 0.625) {};
		\node [style=none] (26) at (-5, -0.625) {};
		\node [style=none] (27) at (-3.75, -0.625) {};
		\node [style=none] (28) at (-4.375, 0) {$F_{\frac{\ell(T_2)}{\ell(T)}}$};
		\node [style=none] (29) at (-5, 0) {};
		\node [style=none] (30) at (-5.25, 0) {};
		\node [style=none] (31) at (-5.5, 0) {$\ket0$};
		\node [style=none] (32) at (-1, 0) {$\equiv$};
		\node [style=none] (33) at (0.25, -0.125) {};
		\node [style=none] (34) at (1.75, -0.125) {};
		\node [style=none] (35) at (0.25, -0.875) {};
		\node [style=none] (36) at (1.75, -0.875) {};
		\node [style=none] (37) at (1, -0.5) {$C(T_2)$};
		\node [style=none] (38) at (0.25, -0.5) {};
		\node [style=none] (39) at (0, -0.5) {};
		\node [style=none] (40) at (2, -0.75) {};
		\node [style=none] (41) at (1.75, -0.75) {};
		\node [style=none] (42) at (2, -0.25) {};
		\node [style=none] (43) at (1.75, -0.25) {};
		\node [style=none] (44) at (0.25, 0.875) {};
		\node [style=none] (45) at (1.75, 0.875) {};
		\node [style=none] (46) at (0.25, 0.125) {};
		\node [style=none] (47) at (1.75, 0.125) {};
		\node [style=none] (48) at (1, 0.5) {$C(T_1)$};
		\node [style=none] (49) at (0.25, 0.5) {};
		\node [style=none] (50) at (0, 0.5) {};
		\node [style=none] (51) at (2, 0.25) {};
		\node [style=none] (52) at (1.75, 0.25) {};
		\node [style=none] (53) at (2, 0.75) {};
		\node [style=none] (54) at (1.75, 0.75) {};
		\node [style=none] (55) at (1.9, 0.6) {$\vdots$};
		\node [style=none] (56) at (1.9, -0.4) {$\vdots$};
		\node [style=none] (64) at (-0.25, -0.5) {$\ket0$};
		\node [style=none] (65) at (-0.25, 0.5) {$\ket0$};
		\node [style=none] (66) at (2.75, 0) {$\equiv$};
		\node [style=none] (74) at (4, -0.75) {};
		\node [style=none] (75) at (3.75, -0.75) {};
		\node [style=none] (76) at (4, -0.25) {};
		\node [style=none] (77) at (3.75, -0.25) {};
		\node [style=none] (85) at (4, 0.25) {};
		\node [style=none] (86) at (3.75, 0.25) {};
		\node [style=none] (87) at (4, 0.75) {};
		\node [style=none] (88) at (3.75, 0.75) {};
		\node [style=none] (89) at (3.9, 0.6) {$\vdots$};
		\node [style=none] (90) at (3.9, -0.4) {$\vdots$};
		\node [style=none] (91) at (3.5, 0.25) {$\ket0$};
		\node [style=none] (92) at (3.5, 0.75) {$\ket0$};
		\node [style=none] (93) at (3.5, -0.75) {$\ket0$};
		\node [style=none] (94) at (3.5, -0.25) {$\ket0$};
	\end{pgfonlayer}
	\begin{pgfonlayer}{edgelayer}
		\draw (3.center) to (1.center);
		\draw (1.center) to (0.center);
		\draw (0.center) to (2.center);
		\draw (2.center) to (3.center);
		\draw (6.center) to (5.center);
		\draw (8.center) to (7.center);
		\draw (10.center) to (9.center);
		\draw (14.center) to (12.center);
		\draw (12.center) to (11.center);
		\draw (11.center) to (13.center);
		\draw (13.center) to (14.center);
		\draw (17.center) to (16.center);
		\draw (19.center) to (18.center);
		\draw (21.center) to (20.center);
		\draw (27.center) to (25.center);
		\draw (25.center) to (24.center);
		\draw (24.center) to (26.center);
		\draw (26.center) to (27.center);
		\draw (30.center) to (29.center);
		\draw (36.center) to (34.center);
		\draw (34.center) to (33.center);
		\draw (33.center) to (35.center);
		\draw (35.center) to (36.center);
		\draw (39.center) to (38.center);
		\draw (41.center) to (40.center);
		\draw (43.center) to (42.center);
		\draw (47.center) to (45.center);
		\draw (45.center) to (44.center);
		\draw (44.center) to (46.center);
		\draw (46.center) to (47.center);
		\draw (50.center) to (49.center);
		\draw (52.center) to (51.center);
		\draw (54.center) to (53.center);
		\draw (75.center) to (74.center);
		\draw (77.center) to (76.center);
		\draw (86.center) to (85.center);
		\draw (88.center) to (87.center);
	\end{pgfonlayer}
\end{tikzpicture}

	\end{align*}
	We can then show inductively the result. The base case is easily checked as $P(\phi)\ket1 = e^{i\phi}\ket1$. Then, suppose $T=(T_1,T_2)$ with corresponding parameters $\vec\alpha = \vec\alpha_1\cdot\vec\alpha_2$. Using the fact that $F_p\ket1 = \sqrt p\ket{01}+\sqrt{1-p}\ket{10}$, with the induction hypothesis, we get
	\begin{align*}
		
\begin{tikzpicture}
	\begin{pgfonlayer}{nodelayer}
		\node [style=none] (0) at (-4, -0.125) {};
		\node [style=none] (1) at (-2.5, -0.125) {};
		\node [style=none] (2) at (-4, -0.875) {};
		\node [style=none] (3) at (-2.5, -0.875) {};
		\node [style=none] (4) at (-3.25, -0.5) {$C(T_2)$};
		\node [style=none] (5) at (-4, -0.5) {};
		\node [style=none] (6) at (-4.25, -0.5) {};
		\node [style=none] (7) at (-2.25, -0.75) {};
		\node [style=none] (8) at (-2.5, -0.75) {};
		\node [style=none] (9) at (-2.25, -0.25) {};
		\node [style=none] (10) at (-2.5, -0.25) {};
		\node [style=none] (11) at (-4, 0.875) {};
		\node [style=none] (12) at (-2.5, 0.875) {};
		\node [style=none] (13) at (-4, 0.125) {};
		\node [style=none] (14) at (-2.5, 0.125) {};
		\node [style=none] (15) at (-3.25, 0.5) {$C(T_1)$};
		\node [style=none] (16) at (-4, 0.5) {};
		\node [style=none] (17) at (-4.25, 0.5) {};
		\node [style=none] (18) at (-2.25, 0.25) {};
		\node [style=none] (19) at (-2.5, 0.25) {};
		\node [style=none] (20) at (-2.25, 0.75) {};
		\node [style=none] (21) at (-2.5, 0.75) {};
		\node [style=none] (22) at (-2.35, 0.6) {$\vdots$};
		\node [style=none] (23) at (-2.35, -0.4) {$\vdots$};
		\node [style=none] (24) at (-5.5, 0.625) {};
		\node [style=none] (25) at (-4.25, 0.625) {};
		\node [style=none] (26) at (-5.5, -0.625) {};
		\node [style=none] (27) at (-4.25, -0.625) {};
		\node [style=none] (28) at (-4.875, 0) {$F_{\frac{w(T_2)}{w(T)}}$};
		\node [style=none] (29) at (-5.5, 0) {};
		\node [style=none] (30) at (-5.75, 0) {};
		\node [style=none] (31) at (-6, 0) {$\ket1$};
		\node [style=none] (32) at (-1.5, 0) {$\equiv$};
		\node [style=none] (33) at (1, -0.125) {};
		\node [style=none] (34) at (2.5, -0.125) {};
		\node [style=none] (35) at (1, -0.875) {};
		\node [style=none] (36) at (2.5, -0.875) {};
		\node [style=none] (37) at (1.75, -0.5) {$C(T_2)$};
		\node [style=none] (38) at (1, -0.5) {};
		\node [style=none] (39) at (0.75, -0.5) {};
		\node [style=none] (40) at (2.75, -0.75) {};
		\node [style=none] (41) at (2.5, -0.75) {};
		\node [style=none] (42) at (2.75, -0.25) {};
		\node [style=none] (43) at (2.5, -0.25) {};
		\node [style=none] (44) at (1, 0.875) {};
		\node [style=none] (45) at (2.5, 0.875) {};
		\node [style=none] (46) at (1, 0.125) {};
		\node [style=none] (47) at (2.5, 0.125) {};
		\node [style=none] (48) at (1.75, 0.5) {$C(T_1)$};
		\node [style=none] (49) at (1, 0.5) {};
		\node [style=none] (50) at (0.75, 0.5) {};
		\node [style=none] (51) at (2.75, 0.25) {};
		\node [style=none] (52) at (2.5, 0.25) {};
		\node [style=none] (53) at (2.75, 0.75) {};
		\node [style=none] (54) at (2.5, 0.75) {};
		\node [style=none] (55) at (2.65, 0.6) {$\vdots$};
		\node [style=none] (56) at (2.65, -0.4) {$\vdots$};
		\node [style=none] (57) at (0.5, -0.5) {$\ket1$};
		\node [style=none] (58) at (0.5, 0.5) {$\ket0$};
		\node [style=none] (59) at (5.75, -0.125) {};
		\node [style=none] (60) at (7.25, -0.125) {};
		\node [style=none] (61) at (5.75, -0.875) {};
		\node [style=none] (62) at (7.25, -0.875) {};
		\node [style=none] (63) at (6.5, -0.5) {$C(T_2)$};
		\node [style=none] (64) at (5.75, -0.5) {};
		\node [style=none] (65) at (5.5, -0.5) {};
		\node [style=none] (66) at (7.5, -0.75) {};
		\node [style=none] (67) at (7.25, -0.75) {};
		\node [style=none] (68) at (7.5, -0.25) {};
		\node [style=none] (69) at (7.25, -0.25) {};
		\node [style=none] (70) at (5.75, 0.875) {};
		\node [style=none] (71) at (7.25, 0.875) {};
		\node [style=none] (72) at (5.75, 0.125) {};
		\node [style=none] (73) at (7.25, 0.125) {};
		\node [style=none] (74) at (6.5, 0.5) {$C(T_1)$};
		\node [style=none] (75) at (5.75, 0.5) {};
		\node [style=none] (76) at (5.5, 0.5) {};
		\node [style=none] (77) at (7.5, 0.25) {};
		\node [style=none] (78) at (7.25, 0.25) {};
		\node [style=none] (79) at (7.5, 0.75) {};
		\node [style=none] (80) at (7.25, 0.75) {};
		\node [style=none] (81) at (7.4, 0.6) {$\vdots$};
		\node [style=none] (82) at (7.4, -0.4) {$\vdots$};
		\node [style=none] (83) at (5.25, -0.5) {$\ket0$};
		\node [style=none] (84) at (5.25, 0.5) {$\ket1$};
		\node [style=none] (85) at (-0.5, 0) {$\sqrt{\frac{w(T_2)}{w(T)}}$};
		\node [style=none] (86) at (4, 0) {$+\sqrt{\frac{w(T_1)}{w(T)}}$};
		\node [style=none] (87) at (-1.5, -2.5) {$\equiv$};
		\node [style=none] (89) at (2.5, -2.625) {};
		\node [style=none] (91) at (2.5, -3.375) {};
		\node [style=none] (95) at (2.75, -3.25) {};
		\node [style=none] (96) at (2.5, -3.25) {};
		\node [style=none] (97) at (2.75, -2.75) {};
		\node [style=none] (98) at (2.5, -2.75) {};
		\node [style=none] (106) at (2.75, -2.25) {};
		\node [style=none] (107) at (2.5, -2.25) {};
		\node [style=none] (108) at (2.75, -1.75) {};
		\node [style=none] (109) at (2.5, -1.75) {};
		\node [style=none] (110) at (2.65, -1.9) {$\vdots$};
		\node [style=none] (111) at (2.65, -2.9) {$\vdots$};
		\node [style=none] (113) at (2.25, -1.75) {$\ket0$};
		\node [style=none] (142) at (2.25, -2.25) {$\ket0$};
		\node [style=none] (144) at (1.875, -3) {$ W(\vec\alpha_2)$};
		\node [style=none] (145) at (1.25, -2.625) {};
		\node [style=none] (146) at (1.25, -3.375) {};
		\node [style=none] (147) at (7.25, -2.5) {};
		\node [style=none] (148) at (7.25, -1.75) {};
		\node [style=none] (149) at (7.5, -1.875) {};
		\node [style=none] (150) at (7.25, -1.875) {};
		\node [style=none] (151) at (7.5, -2.375) {};
		\node [style=none] (152) at (7.25, -2.375) {};
		\node [style=none] (153) at (7.5, -2.875) {};
		\node [style=none] (154) at (7.25, -2.875) {};
		\node [style=none] (155) at (7.5, -3.375) {};
		\node [style=none] (156) at (7.25, -3.375) {};
		\node [style=none] (157) at (7.4, -3.025) {$\vdots$};
		\node [style=none] (158) at (7.4, -2.025) {$\vdots$};
		\node [style=none] (159) at (7, -3.375) {$\ket0$};
		\node [style=none] (160) at (7, -2.875) {$\ket0$};
		\node [style=none] (162) at (6, -2.5) {};
		\node [style=none] (163) at (6, -1.75) {};
		\node [style=none] (164) at (6.625, -2.125) {$ W(\vec\alpha_1)$};
		\node [style=none] (165) at (-0.5, -2.5) {$\sqrt{\frac{w(T_2)}{w(T)}}$};
		\node [style=none] (166) at (4, -2.5) {$+\sqrt{\frac{w(T_1)}{w(T)}}$};
	\end{pgfonlayer}
	\begin{pgfonlayer}{edgelayer}
		\draw (3.center) to (1.center);
		\draw (1.center) to (0.center);
		\draw (0.center) to (2.center);
		\draw (2.center) to (3.center);
		\draw (6.center) to (5.center);
		\draw (8.center) to (7.center);
		\draw (10.center) to (9.center);
		\draw (14.center) to (12.center);
		\draw (12.center) to (11.center);
		\draw (11.center) to (13.center);
		\draw (13.center) to (14.center);
		\draw (17.center) to (16.center);
		\draw (19.center) to (18.center);
		\draw (21.center) to (20.center);
		\draw (27.center) to (25.center);
		\draw (25.center) to (24.center);
		\draw (24.center) to (26.center);
		\draw (26.center) to (27.center);
		\draw (30.center) to (29.center);
		\draw (36.center) to (34.center);
		\draw (34.center) to (33.center);
		\draw (33.center) to (35.center);
		\draw (35.center) to (36.center);
		\draw (39.center) to (38.center);
		\draw (41.center) to (40.center);
		\draw (43.center) to (42.center);
		\draw (47.center) to (45.center);
		\draw (45.center) to (44.center);
		\draw (44.center) to (46.center);
		\draw (46.center) to (47.center);
		\draw (50.center) to (49.center);
		\draw (52.center) to (51.center);
		\draw (54.center) to (53.center);
		\draw (62.center) to (60.center);
		\draw (60.center) to (59.center);
		\draw (59.center) to (61.center);
		\draw (61.center) to (62.center);
		\draw (65.center) to (64.center);
		\draw (67.center) to (66.center);
		\draw (69.center) to (68.center);
		\draw (73.center) to (71.center);
		\draw (71.center) to (70.center);
		\draw (70.center) to (72.center);
		\draw (72.center) to (73.center);
		\draw (76.center) to (75.center);
		\draw (78.center) to (77.center);
		\draw (80.center) to (79.center);
		\draw (91.center) to (89.center);
		\draw (96.center) to (95.center);
		\draw (98.center) to (97.center);
		\draw (107.center) to (106.center);
		\draw (109.center) to (108.center);
		\draw (146.center) to (145.center);
		\draw (146.center) to (91.center);
		\draw (145.center) to (89.center);
		\draw (148.center) to (147.center);
		\draw (150.center) to (149.center);
		\draw (152.center) to (151.center);
		\draw (154.center) to (153.center);
		\draw (156.center) to (155.center);
		\draw (163.center) to (162.center);
		\draw (163.center) to (148.center);
		\draw (162.center) to (147.center);
	\end{pgfonlayer}
\end{tikzpicture}

	\end{align*}
	Noticing that $w(T_i)=w(\alpha_i)$, $w(T)=w(T_1)+w(T_2)$ and $\ell(T_i) = |\alpha_i|$, the circuit hence maps $\ket1$ to:
	\[\frac1{\sqrt{w(T)}}\left(\sqrt{w(\vec\alpha_2)}\ket{0^{\ell(T_1)}} \ket{W(\vec\alpha_2)} + \sqrt{w(\vec\alpha_1)} \ket{W(\vec\alpha_1)}\ket{0^{\ell(T_2)}}\right) = \ket{W(\vec\alpha)}\]
\end{proof}

\begin{proof}[Proof of \Cref{thm:Wn-weight}]
Let $t$ be a complete tree whose leaves are the squares of the amplitudes of $\vec\alpha$. Then $C(t)\ket1$ is a circuit that builds $\ket{W(\vec\alpha)}$ by \Cref{prop:semantics-W-prep-weight}, and whose height is $\mathcal O(\log_2(n))$ since the tree is complete.
\end{proof}

The proof of $\Cref{prop:Fp-resources}$ requires a trivial but useful result on the square of odd numbers:

\begin{lemma}
\label{lem:odd-square-mod-4}
Suppose $a\in\mathbb Z$ is odd. Then $a^2 = 1 \bmod 4$.
\end{lemma}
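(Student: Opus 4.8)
The plan is to reduce the claim to a one-line algebraic identity using the standard parametrization of odd integers. First I would write $a = 2k+1$ for some $k \in \mathbb{Z}$; this is exactly the definition of $a$ being odd, and it covers \emph{every} odd integer, including the negative ones (by allowing $k$ to be negative), so there is no need to treat signs separately. Then I would expand the square:
\[
a^2 = (2k+1)^2 = 4k^2 + 4k + 1 = 4(k^2 + k) + 1.
\]
Since $k^2 + k \in \mathbb{Z}$, the term $4(k^2+k)$ is divisible by $4$, and hence $a^2 = 1 \bmod 4$, which is the assertion.

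There is no real obstacle here: the only point worth a moment's care is making sure the parametrization is exhaustive over all of $\mathbb{Z}$ (not just the positive odd integers), which it is. As an aside, one could even note that $k^2+k = k(k+1)$ is a product of consecutive integers, hence even, so in fact $a^2 = 1 \bmod 8$; but the weaker statement modulo $4$ is all that is needed in the proof of \Cref{prop:Fp-resources}, so I would stop at the displayed computation.
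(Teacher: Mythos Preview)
Your proof is correct and is essentially identical to the paper's own: both write $a=2k+1$, expand the square to $4k^2+4k+1$, and read off the residue modulo $4$. Your extra remarks on negative $a$ and the $\bmod\ 8$ strengthening are fine but unnecessary for the application.
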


\begin{proof}
Write $a = 2\alpha+1$. Then $a^2 = 4\alpha^2+4\alpha+1 = 1\bmod 4$.
\end{proof}

We can then show:

\begin{proof}[Proof of \Cref{prop:Fp-resources}]
When $p\in\{0,\frac12,1\}$, we already know how to build a Clifford+T circuit that implements $F_p$. Let's now suppose that $F_p$ is in Clifford+T, and show that it forces $p\in\{0,\frac12,1\}$.

Result from \cite{Giles2013Exact} shows that any Clifford+T circuit implements a matrix with coefficients in $\mathbb Z[\frac1{\sqrt2},i]$. We hence have $\sqrt p\in\mathbb Z[\frac1{\sqrt2}]$ and  $\sqrt{1-p}\in\mathbb Z[\frac1{\sqrt2}]$. Suppose:
\[\sqrt p = \frac 1{2^k}(a+\sqrt2b)\qquad\qquad\sqrt{1-p} = \frac 1{2^\ell}(c+\sqrt2d)\]
for $k,\ell\in\mathbb N$ and $a,b,c,d\in\mathbb Z$. Unless both $a$ and $b$ are null (which is the trivial case $p=0$), we assume that at leat one of $a$ and $b$ is odd, since otherwise we could simplify the decomposition by dividing by $2$. Similarly, at least one of $c$ and $d$ is odd (unless $p=1$). Without loss of generality, we can assume that $k\geq l$.

We have:
\begin{align*}
	4^k &= 4^k\left(\sqrt p^2 + \sqrt{1-p}^2\right) = a^2+2b^2+2\sqrt2ab + 4^{k-\ell}(c^2+2d^2+2\sqrt2cd)\\
	&\implies\begin{cases}
		a^2+2b^2 + 4^{k-\ell}(c^2+2d^2) = 4^{k}\\
		ab+4^{k-\ell}cd = 0
	\end{cases}
\end{align*}
We can already show that $\ell = k$. Indeed, assuming $k>\ell$, the first equation gives:
\[a^2+2b^2 = 0 \bmod4\]
which is a contradiction when either $a$ or $b$ (or both) is odd, using \Cref{lem:odd-square-mod-4}. We hence have:
\[\begin{cases}
	a^2+2b^2 + c^2+2d^2 = 4^k\\
	ab+cd = 0
\end{cases}\]
We can show that the case $k=0$ has no solution unless $p\in\{0,1\}$. Assuming $k=0$, the first equation yields $a^2+c^2=1\bmod2$, i.e.~one of $a$ or $c$ is odd, the other is even. Suppose w.l.o.g.~that $a$ is odd. Since $c$ is even, $d$ has to be odd (or $\sqrt{1-p}=0$). The first equation then yields $1+2b^2+0+2 = 1\bmod 4$, hence $b$ is odd. The fact that both $a$ and $b$ are odd while $c$ is even contradicts the second equation $ab+cd=0$. We can now assume $k\geq1$.

We then distinguish based on the parity of $a$.
\begin{itemize}
	\item Suppose $a$ is odd. Then the first equation $\bmod 2$ gives that $c$ is odd as well. Then, the first equation $\bmod 4$ together with \Cref{lem:odd-square-mod-4} gives $2(b^2+d^2)=2\bmod4$, i.e.~$b^2+d^2=1\bmod2$. Hence exactly one of $b$ and $d$ is odd, the other is even. Since $a$ and $c$ are odd, this contradicts the second equation $ab+cd=0$.
	\item Suppose $a$ is even. Then $b$ is odd. The first equation $\bmod 2$ shows that $c$ is even, which implies $d$ is odd. Let's then write:
	\[a = 2\alpha,\qquad b=2\beta+1, \qquad c=2\gamma, \qquad d = 2\delta+1\]
	The first equation then yields:
	\begin{align*}
		4\alpha^2 + 8\beta^2+8\beta+2 + 4\gamma^2 + 8\delta^2+8\delta+2 = 4^k\\
		\implies \alpha^2+\gamma^2+2(\beta^2+\beta+\gamma^2+\gamma)+1 = 4^{k-1}
	\end{align*}
	The case $k=1$ has to be handled differently:
	\[k=1\implies \alpha^2+\gamma^2+2(\beta^2+\beta+\gamma^2+\gamma)=0\]
	This can only happen when $\beta^2+\beta+\gamma^2+\gamma$ is non-positive, which can only happen when $\beta=\delta=0$. Then $\alpha=\gamma=0$. This case corresponds to the case $p=\frac12$.
	
	When $k\geq2$, we get $\alpha^2+\gamma^2 = 1 \bmod 2$ from the first equation. From the second equation, we get:
	\begin{align*}
		4\alpha\beta+2\alpha + 4\gamma\delta+2\gamma = 0 \implies 2\alpha\beta +\alpha + 2\gamma\delta+\gamma = 0 \implies  \alpha+\gamma = 0 \bmod 2
	\end{align*}
	which is a contradiction.
\end{itemize}
Hence, the only cases where $F_p$ can be built in Clifford+T are exactly when $p\in\{0,\frac12,1\}$.
\end{proof}

\begin{proof}[Proof of \Cref{thm:Wn}]
	Again by taking a complete tree, one can notice that at most one subtree per level is not itself complete, and that when all leaves have weight $1$, a complete tree $t=(t_1,t_2)$ is such that $w(t_2)/w(t_1) = 1/2$ (so $F_p$ is in Clifford+T). Hence we have at most $\mathcal O(\log_2(n))$ non-Clifford+T gates.
\end{proof}

\begin{proof}[Proof of \Cref{cor:weighted-Wn}]
	This is a direct application of \Cref{thm:Gosset} for the state preparation of $\sum_i \alpha_i\ket i$, together with either \Cref{cor:i-to-ei} or \Cref{thm:permut} to map the basis states $\ket{i}$ to the $\ket{e_i}$.
\end{proof}

\begin{proof}[Proof of \Cref{cor:Wn-cliffT}]
Let $s:=\lceil \log_2(n)\rceil$. Let's first build $\ket{W_n}$. To do so, we will use a circuit that implements the comparator with $n$ on $s$ qubits:
\[\operatorname{comp}_n : \ket{i, y}\mapsto \begin{cases}\ket{i,y} & \text{ if }i < n\\\ket{i,y\oplus 1}&\text{ otherwise}\end{cases}\]
We then create the following circuit:
\[
\begin{tikzpicture}[scale=1.2]
	\begin{pgfonlayer}{nodelayer}
		\node [style=none] (0) at (-1.25, 0.5) {};
		\node [style=none] (1) at (-0.5, 0.5) {};
		\node [style=none] (2) at (-1.25, 0) {};
		\node [style=none] (3) at (-0.5, 0) {};
		\node [style=none] (4) at (-1.5, 0.5) {$\ket0$};
		\node [style=none] (5) at (-1.5, 0) {$\ket0$};
		\node [style=box] (6) at (-0.875, 0.5) {$H$};
		\node [style=box] (7) at (-0.875, 0) {$H$};
		\node [style=none] (8) at (-1, -0.5) {};
		\node [style=none] (9) at (-0.5, -0.5) {};
		\node [style=none] (10) at (-1.25, -0.5) {$\ket0$};
		\node [style=none] (11) at (-0.5, 0.75) {};
		\node [style=none] (12) at (-0.5, -0.75) {};
		\node [style=none] (13) at (0.75, 0.75) {};
		\node [style=none] (14) at (0.75, -0.75) {};
		\node [style=none] (15) at (0.125, 0) {$\operatorname{comp}_n$};
		\node [style=none] (16) at (0.75, -0.5) {};
		\node [style=none] (17) at (1, -0.5) {};
		\node [style=none] (18) at (1.25, -0.5) {$\bra0$};
		\node [style=none] (19) at (0.75, 0) {};
		\node [style=none] (20) at (1.25, 0) {};
		\node [style=none] (21) at (0.75, 0.5) {};
		\node [style=none] (22) at (1.25, 0.5) {};
		\node [style=none] (23) at (1.25, 0.75) {};
		\node [style=none] (24) at (1.25, -0.25) {};
		\node [style=none] (25) at (1.75, 0.75) {};
		\node [style=none] (26) at (1.75, -0.25) {};
		\node [style=none] (27) at (1.75, 0) {};
		\node [style=none] (28) at (2.25, 0) {};
		\node [style=none] (29) at (1.75, 0.5) {};
		\node [style=none] (30) at (2.25, 0.5) {};
		\node [style=none] (31) at (1, 0.35) {$\vdots$};
		\node [style=none] (32) at (2, 0.35) {$\vdots$};
		\node [style=none] (33) at (-1.25, 0.35) {$\vdots$};
		\node [style=none] (34) at (1.5, 0.25) {$U_f$};
	\end{pgfonlayer}
	\begin{pgfonlayer}{edgelayer}
		\draw (1.center) to (0.center);
		\draw (3.center) to (2.center);
		\draw (9.center) to (8.center);
		\draw (12.center) to (11.center);
		\draw (14.center) to (13.center);
		\draw (13.center) to (11.center);
		\draw (12.center) to (14.center);
		\draw (17.center) to (16.center);
		\draw (20.center) to (19.center);
		\draw (22.center) to (21.center);
		\draw (24.center) to (23.center);
		\draw (26.center) to (25.center);
		\draw (26.center) to (24.center);
		\draw (23.center) to (25.center);
		\draw (28.center) to (27.center);
		\draw (30.center) to (29.center);
	\end{pgfonlayer}
\end{tikzpicture}
\]
where the $\bra0$ is a postselected measurement in the $\ket0$ state, and $U_f$ is the circuit from \Cref{cor:i-to-ei} that maps $\ket i$ to $\ket{e_i}$. Before the postselection, the system is in state:
\[\frac1{\sqrt{2^s}}\left(\sum_{i=0}^{n-1}\ket{i,0} + \sum_{i=n}^{2^s-1}\ket{i,1}\right)\]
Since $n>2^s/2$, it is fairly direct that the postselected measurement has success \mbox{probability $>1/2$}. It should now be quite direct that the above circuit generates $\ket{W_n}$ when the measurement is successful. There exist constructions for the comparator \cite{Vandaele2026asymptotically} whose metrics are asymptotically negligible w.r.t.~those of $U_f$.

It now remains to add the phases. It can be done by applying to each output qubit $i$ a phase gate with angle $\ell_i\pi/4$. This can be decomposed as a layer of Clifford phase gates, followed by a layer of identities and T-gates. Because we operate in the span of binary strings with weight $\leq1$, these $k$ parallel T-gates can be replaced by a \emph{phase gadget}, which uses a single T-gate and can be implemented in depth $\mathcal O(\log_2(k))\leq\mathcal O(\log_2(n))$ \cite{Cowtan2020PhaseGadget}. E.g.:
\[
\begin{tikzpicture}
	\begin{pgfonlayer}{nodelayer}
		\node [style=none] (0) at (-1.25, 0.25) {$\oplus$};
		\node [style=dot] (1) at (-1.25, 0.75) {};
		\node [style=none] (2) at (-4, 0.75) {};
		\node [style=none] (3) at (-3, 0.75) {};
		\node [style=box] (4) at (-3.5, 0.75) {$T$};
		\node [style=none] (5) at (-4, 0.25) {};
		\node [style=none] (6) at (-3, 0.25) {};
		\node [style=box] (7) at (-3.5, 0.25) {$T$};
		\node [style=none] (8) at (-4, -0.25) {};
		\node [style=none] (9) at (-3, -0.25) {};
		\node [style=box] (10) at (-3.5, -0.25) {$T$};
		\node [style=none] (11) at (-4, -0.75) {};
		\node [style=none] (12) at (-3, -0.75) {};
		\node [style=box] (13) at (-3.5, -0.75) {$T$};
		\node [style=none] (14) at (-2.5, 0) {$\to$};
		\node [style=none] (15) at (-1.75, 0.75) {};
		\node [style=none] (16) at (1.25, 0.75) {};
		\node [style=none] (18) at (-1.75, 0.25) {};
		\node [style=none] (19) at (1.25, 0.25) {};
		\node [style=none] (21) at (-1.75, -0.25) {};
		\node [style=none] (22) at (1.25, -0.25) {};
		\node [style=box] (23) at (-0.25, -0.25) {$T$};
		\node [style=none] (24) at (-1.75, -0.75) {};
		\node [style=none] (25) at (1.25, -0.75) {};
		\node [style=none] (26) at (-0.75, -0.25) {$\oplus$};
		\node [style=dot] (27) at (-0.75, 0.25) {};
		\node [style=none] (28) at (-1.25, -0.25) {$\oplus$};
		\node [style=dot] (29) at (-1.25, -0.75) {};
		\node [style=none] (30) at (0.75, 0.25) {$\oplus$};
		\node [style=dot] (31) at (0.75, 0.75) {};
		\node [style=none] (32) at (0.25, -0.25) {$\oplus$};
		\node [style=dot] (33) at (0.25, 0.25) {};
		\node [style=none] (34) at (0.75, -0.25) {$\oplus$};
		\node [style=dot] (35) at (0.75, -0.75) {};
		\node [style=none, rotate=90, font={\scriptsize}] (37) at (-4.5, 0) {span$\{\ket x ; |x|\leq1\}$};
	\end{pgfonlayer}
	\begin{pgfonlayer}{edgelayer}
		\draw (0.center) to (1);
		\draw (2.center) to (3.center);
		\draw (5.center) to (6.center);
		\draw (8.center) to (9.center);
		\draw (11.center) to (12.center);
		\draw (15.center) to (16.center);
		\draw (18.center) to (19.center);
		\draw (21.center) to (22.center);
		\draw (24.center) to (25.center);
		\draw (26.center) to (27);
		\draw (28.center) to (29);
		\draw (30.center) to (31);
		\draw (32.center) to (33);
		\draw (34.center) to (35);
	\end{pgfonlayer}
\end{tikzpicture}
\]
This last part only adds 1 T-gate, and doesn't change the asymptotic depth.
\end{proof}

\section{Synthesis Algorithms}

\label{sec:appendix:algos}

\begin{algorithm2e}[!ht]
	\caption{AntiDiagRemoval}
	\label{alg:anti-diag-rem}
	\KwData{$i$ an index, and $U$ an upper triangular matrix, with $n$ rows and $m$ columns, such that the $j$th anti-diagonal is null for $j<i$.}
	\KwResult{A set of (disjoint) pairs of indices, representing a round of parallel row additions $R$ that remove anti-diagonal $i$ when applied to $U$, if successful; empty if unsuccessful. Updated $U$ if successful. A boolean stating if the diagonal removal was successful.}
	$R\gets \{\}$\;
	$k\gets 0$\;
	\If{$i\geq m$}{$k\gets i-m-1$}
	\While{$k<i-k$}{
		\If{$U[k,i-k]=1$}{
			\eIf{$U[i-k,i-k]=1$}{$R\gets R\cup\{(i-k,k)\}$}{\Return{$\{\}$, $U$, \textsf{ff}}}
		}
		$k\gets k+1$
	}
	$U\gets$ updated $U$ with row additions from $R$\;
	\Return{$R$, $U$, \textsf{tt}}
\end{algorithm2e}

\begin{algorithm2e}[!ht]
	\caption{UpElimComp}
	\label{alg:upper-elim-comp}
	\KwData{$U$ an upper triangular matrix, with $m$ columns, and with pivots on the left.}
	\KwResult{A list of sets of (disjoint) tuples of indices, representing a sequence of parallel row multiplications (if tuple is triplet) and row additions (if tuple is pair). A list of pairs, representing a sequence of swaps.}
	$R\gets []$  \Comment*[r]{resulting list of sets of tuples}
	$P \gets []$ \Comment*[r]{resulting permutation on columns}
	$d\gets 1$   \Comment*[r]{index of first anti-diagonal not set to $0$}
	$t\gets 0$   \Comment*[r]{index of first row of weight $\geq2$}
	$i_0\gets$ first index s.t.~$U[i_0,i_0]=0$\;
	\For{$i_0\leq i<m$}{
		$b\gets \textsf{tt}$\;
		\While{$b$ and $d<t+i$}{
			$r, U, b \gets \operatorname{AntiDiagRemoval}(d,U)$\;
			\If{$b$}{
				$R\gets R@[r]$\;
				$d\gets d+1$
			}
			$t\gets$ first index s.t.~row $t$ of $U$ has weight $\geq 2$
		}
		$j\gets$ first index $\geq i$ s.t.~$U_{t,j}=1$\;
		\If{$i\neq j$}{
			swap columns $i$ and $j$\;
			$P\gets P@[(i,j)]$
		}
		$t'\gets$ second index s.t.~$U_{t',i}=1$\;
		$U[i] \gets U[i] \oplus U[t]\odot U[t']$ \Comment*[r]{with $\odot$ the element-wise product}
		$R\gets R@[(t,t',i)]$
	}
	\While{$d<2m$}{
		$r, U, \underline{~~} \gets \operatorname{AntiDiagRemoval}(d,U)$\;
		$R\gets R@[r]$\;
		$d\gets d+1$
	}
	\Return{$R$, $P$}
\end{algorithm2e}

\end{document}